\def\TITLEnl{Distribution Network Topology Detection with Time Series Measurement Data Analysis}
\def\AUTHORS{Guido Cavraro, Reza Arghandeh, Alexandra von Meier}
\newtheorem{lemma}{Lemma}
\newtheorem{proposition}{Proposition}
\newtheorem{remark}{Remark}
\newtheorem{assumption}{Assumption}
\newtheorem{example}{Example}
\def\complexnumbers{\mathbb{C}}
\def \graph	{\mathcal{G}}		
\def \nodes	{\mathcal{V}}		
\def \adm {{\boldsymbol Y}}
\def \green {{\boldsymbol X}}
\def \edges	{\mathcal{E}}		
\def \topology {{\boldsymbol T}}
\def \nonodes {{n}}
\def \noswitches {{r}}
\def \nosensors {{p}}
\def \noedges {{w}}
\def\1{{\mathbf{1}}}
\title{\LARGE \bf \TITLEnl}
\author{\AUTHORS%
\thanks{G. Cavraro is with the Department of Information Engineering, University of Padova, Italy. 
Email: {\tt\small cavraro@dei.unipd.it.} %
}
\thanks{R. Arghandeh is with the California Institute for Energy and Environment (CIEE), University of California, Berkeley, CA, USA. Email: {\tt\small arghandeh@berkeley.edu}}
%
}
\begin{document}

\maketitle

\begin{abstract}
This paper proposes a novel approach for detecting the topology of distribution networks based on the analysis of time series measurements. The time-based analysis approach draws on data from high-precision phasor measurement units (PMUs or synchrophasors) for distribution systems. A key fact is that time-series data taken from a dynamic system show specific patterns regarding state transitions such as opening or closing switches, as a kind of signature from each topology change. The proposed algorithm here is based on the comparison of the actual signature of a recent state transition against a library of signatures derived from topology simulations. The IEEE 33-bus model is used for initial algorithm validation. 
\end{abstract}

\section{Introduction}
Different tools have been developed and implemented to monitor distribution network behavior with more detailed and temporal information, such as SCADA, smart meters and line sensors. Creating observability out of disjointed data streams still remains a challenge, though. Given the present monitoring technologies, more, better and faster data from behind the substation will be needed to realize smart distribution networks \cite{vonMeier2014Chap}.
The cost of monitoring systems in distribution networks remains a barrier to equip all nodes with measurement devices. To some extent, a capable Distribution System State Estimation can compensate for the lack of direct sensor data to support observability. 
However, switches status errors can easily be misinterpreted as analog measurement errors (e.g. voltage or current readings). Thus topology detection is an important enabling component for state estimation as well as a host of other operation and control functions based on knowledge of the system operating states in real-time.

Topology detection in distribution networks is not a trivial problem. Firstly, switches may not reliably communicate their status to the distribution operator, so the topology can only be determined by sending crews into the field. Moreover, the reported switch status may be faulty due to switch malfunction and unreported maintenance crew manipulation.
Integration of distributed energy resources, electric vehicles and controllable loads add more dynamics to the distribution networks that may lead to more frequent protection and switching actions.
A survey of utilities experts done by the authors shows on average 5 to 10 switching actions happens under an urban distribution substation \cite{AMP2014}. Furthermore, knowledge of the correct, updated topology is essential for safety and service restoration after outages\cite{Lueken2012}. 

Most of literature on topology detection are based on state estimator results and measurement matching with different topologies. In \cite{korres2012} authors propose a state estimation algorithm that incorporates switching device status as additional state variables. A normalized residual test is used to identify the best estimate of the topology. State estimation based algorithms are limited to state estimator accuracy. They are also sensitive to measurement device placement. In \cite{sharon2012}, the authors provide a tool for choosing sensor placement for topology detection. Given a particular placement of sensors, the tool reveals the confidence level at which the status of switching devices can be detected. Authors in \cite{Ciobotaru2007} and \cite{Liserre2007} are focused on estimating the impedance at the feeder level. However, even a perfect identification of network impedance cannot always guarantee the correct topology, since multiple topologies could present very similar impedances.

The work in \cite{Arya2011} used power flow analysis for matching substation loading and aggregated household meter load data from network connectivity modeling.  They assumed metering load data are time synchronized with a measurement device on each transformer. The assumption is still far from an actual load metering system. Moreover, convergence of the proposed optimization is sensitive to bad data. Voltage measurement cross correlation for house meters data is presented in \cite{Short2013}. However, residential meters do not usually provide voltage measurements for utility operations. Voltage correlation-based methods in balanced feeders, feeders with PV resources, and feeders with inaccurate GIS models can be error prone. Moreover, voltage measurement in meters are hourly or 15 minute average values. Voltage average values introduce additional errors in voltage regression-based methods for topology detection. Most of the proposed methods in literature are post-processing methods which depend on correct execution of  state estimation or power flow.

In this paper, a novel approach for topology detection is proposed based on time series analysis of phasor measurement unit (PMU) data. This approach is inspired by high-precision phasor measurement units for distribution systems micro-synchrophasors ($\mu$-PMU), which the authors are involved in implementing \cite{vonMeier2014}. The main idea derives from the fact that time-series data from a dynamic system show specific patterns regarding system state transitions, a kind of signature left from each topology change. 
The algorithm is based on the comparison of the trend vector, built from system observations, with a library of signatures derived from the possible topologies transitions. The topology detection results are impacted by load uncertainty and measurement device accuracy. Therefore, the conduced analysis takes load dynamics and measurement error into account.
A set of actual secondly load measurements are used for a number of residential customers in the United States. The derived statistical load model is applied to the topology detection scenarios to validate the proposed algorithm. The topology detection accuracy is also dependent on  $\mu$-PMU placement. We proposed a $\mu$-PMU placement approach for topology detection application. The analysis shows that the topology detection algorithm converges even with limited measurement devices. 


\section{Distribution Grid Model}
\label{sec:model}
This section presents the distribution network model and its related notations. 
Given a matrix $W$, we denote its transpose by $W^T$  and its conjugate transpose by $W^*$. We denote the matrices of the real and of the imaginary part of $W$ by $\Re(W)$ and by $\Im(W)$, respectively. We denote the entry of $W$ that belongs to the $j$-th row and to the $k$-th column by $W_{jk}$. 

Given a vector $v$, $v_j$ will denote its $j$-th entry, while $v_{-j}$ the subvector of $v$, in which the $j$-th entry has been eliminated. We denote its complex conjugate by $\overline{v}$. Given two vectors $v$ and $w$, we denote by $\langle v,w \rangle$ their inner product $v^*w$. 
We define the column vector of all ones by $\1$.

We associate with the electric grid the directed graph $\graph= (\nodes, \edges)$ and the sets $\mathcal S$ and $\mathcal P$, where 
\begin{itemize}
\item $\nodes$ is the set of nodes (the buses), with cardinality $\nonodes$;
\item $\edges$ is the set of edges (the electrical lines connecting the buses), with cardinality $\noedges$; 
\item $\mathcal S$ is the set of switches (or breakers) deployed in the electrical grid, with cardinality $\noswitches$
\item $\mathcal P$ is the set of nodes endowed with voltage phasor measurement units (PMUs), with cardinality $\nosensors$
\end{itemize}
Let $A \in \{0, \pm 1\}^{\noedges \times \nonodes}$ 
be the incidence matrix of the graph $\graph$, 
$$A=
\begin{bmatrix}
a_1^T &
\dots &
a_\noedges^T
\end{bmatrix}^T
$$
where $a_j$ is the $j$-th row of $A$. The elements of $A$ are all zeroes except for the entries associated with the nodes connected by the $j$-th edge, for which they are equal to $+1$ or $-1$  respectively.  
In this study, we consider the steady state behavior of the system, when all voltages and currents are sinusoidal signals waving at the same frequency $\omega_0$. Thus, they can be expressed via a complex number whose magnitude corresponds to the signal root-mean-square value, and whose phase corresponds to the phase of the signal with respect to an arbitrary global reference. Therefore, $x$ represents the signal
$
x(t) = |x| \sqrt{2} \sin(\omega_0 t + \angle x).
$
The system state is described by the following quantities:
\begin{itemize}
\item $u \in \complexnumbers^\nonodes$, where $u_v$ is the grid voltage at node $v$;
\item $i \in \complexnumbers^\nonodes$, where $i_v$ is the current injected at node $v$;
\item $s = p + i q \in \complexnumbers^\nonodes$, where $s_v$, $p_v$ and $q_v$ are the complex, the active and the reactive power injected at node $v$, respectively;
\item $\sigma \in \{0,1\}^ \noswitches$, where $\sigma_v$ is the staus of the breaker $v$: $\sigma_v = 0$ if the switch $v$ is open, $\sigma_v = 1$ if the switch $v$ is closed;
\item $y \in \complexnumbers^\nosensors$, where $y_v$ is the grid voltage measured by the sensor $v$ 
\item the \emph{trend vector} $\delta(t_1,t_2)\in \complexnumbers^\nosensors$, defined as the difference between phasorial voltages taken at the two time istant $t_1$ and $t_2$. i.e. $ \delta(t_1,t_2)={u}(t_1)-{u}(t_2)$.
\end{itemize}
We assume that the PMUs deployed in the grid take measurements at the frequency $f$. 
We denote with $\topology_\sigma$ the unique topology whose switches status is described by $\sigma$. Its bus admittance matrix $\adm_\sigma$ is defined as
\begin{equation}
(\adm_\sigma)_{jk} = \begin{cases}
\sum_{j \neq k} Y_{jk}, \text{ if } j = k \\
- Y_{jk}, \text{ otherwise }
\end{cases}
\label{eq:busAdmMatrixDef}
\end{equation}
where $Y_{jk}$ is the admittance of the branch connecting bus $j$ and bus $k$ and where we neglect the shunt admittances. From \eqref{eq:busAdmMatrixDef}, it can be seen that $\adm_\sigma$ is symmetric and satisfies
\begin{equation}
\adm_\sigma \1 = 0,
\label{eq:admittance_null_space}
\end{equation}
i.e. $\1$ belongs to the Kernel of $\adm_\sigma$. Furthermore, it can be shown that if $\graph$, the graph associated to the electrical grid, is connected, then the kernel of $\adm_\sigma$ has dimension 1.

We model the substation as an ideal sinusoidal voltage source (\emph{slack bus}) at the distribution network nominal voltage $U_N$. We  assume, without loss of generality, that $U_N$ is a real number. We model all nodes but the substation as \emph{constant power devices}, or \emph{P-Q buses}. In the following, $u_1, s_1, p_1, q_1$ will denote the voltage of the substation, and its complex, its active and its reactive power injected, respectively.
%
The system state satisfies the following equations
\begin{align}
&i = \adm_\sigma u \label{eq:nodevoltage}\\
&u_1 = U_N  \label{eq:PCCidealvoltgen}\\
&u_v \bar i_v = p_v + i q_v \qquad v\neq 0 \label{eq:nodeconstpwr} 
\end{align}

The following Lemma \cite{Zampieri} introduces a particular, very useful pseudo inverse of $\adm_\sigma$.
\begin{lemma}
There exists a unique symmetric, positive semidefinite matrix $\green_\sigma \in \complexnumbers^{n\times n}$ such that
\begin{equation}
\begin{cases}
\green_\sigma \adm_\sigma = I - \1 e_1^T \\
\green_\sigma e_1 = 0.
\end{cases}
\label{eq:Xproperties}
\end{equation}
\label{lemma:X}
\end{lemma}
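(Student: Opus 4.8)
The plan is to reduce the statement to a property of the principal submatrix of $\adm_\sigma$ obtained by deleting the slack row and column. Partition the node set as $\{1\}\cup\{2,\dots,\nonodes\}$ and write, using the symmetry of $\adm_\sigma$,
\[
\adm_\sigma=\begin{bmatrix}(\adm_\sigma)_{11} & a^{T}\\ a & \widehat{\adm}\end{bmatrix},\qquad a\in\complexnumbers^{\nonodes-1},\quad \widehat{\adm}=\widehat{\adm}^{\,T}\in\complexnumbers^{(\nonodes-1)\times(\nonodes-1)}.
\]
Reading \eqref{eq:admittance_null_space} blockwise immediately yields two relations, $(\adm_\sigma)_{11}+a^{T}\1=0$ and, the one I will actually use, $a=-\widehat{\adm}\,\1$.

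The first and main step is to show that $\widehat{\adm}$ is nonsingular. Here I would invoke the passivity of the lines ($\Re(Y_{jk})>0$ on every branch), so that $\Re(\adm_\sigma)$ is the Laplacian of the connected graph $\graph$ with the strictly positive weights $\Re(Y_{jk})$; hence $\Re(\adm_\sigma)$ is real symmetric positive semidefinite with kernel exactly $\operatorname{span}(\1)$, and consequently its grounded block $G:=\Re(\widehat{\adm})$ is symmetric positive definite. Writing $\widehat{\adm}=G+\mathrm iB$ with $B=\Im(\widehat{\adm})=B^{T}$, any $x=u+\mathrm iv$ in $\ker\widehat{\adm}$ satisfies $Gu=Bv$ and $Gv=-Bu$; eliminating $u$ gives $v^{T}(G+BG^{-1}B)v=0$, and since $G\succ 0$ this forces $v=0$ and then $u=0$, so $\widehat{\adm}$ is invertible. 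I expect this invertibility (together with the semidefiniteness statements below) to be the only substantive point — it can alternatively just be quoted from \cite{Zampieri}.

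Granting this, I would exhibit $\green_\sigma$ explicitly by padding $\widehat{\adm}^{-1}$ with a zero slack row and column,
\[
\green_\sigma=\begin{bmatrix}0 & 0^{T}\\ 0 & \widehat{\adm}^{-1}\end{bmatrix}.
\]
Then $\green_\sigma e_1=0$ is immediate, and carrying out the block product the first row of $\green_\sigma\adm_\sigma$ vanishes while the remaining block is $[\,\widehat{\adm}^{-1}a\ \ I\,]$; substituting $\widehat{\adm}^{-1}a=\widehat{\adm}^{-1}(-\widehat{\adm}\,\1)=-\1$ turns this into exactly $I-\1 e_1^{T}$. Symmetry of $\green_\sigma$ is inherited from $\widehat{\adm}^{\,T}=\widehat{\adm}$, and its semidefiniteness from the corresponding property of $\widehat{\adm}^{-1}$ (e.g.\ $\Re(\widehat{\adm}^{-1})=(G+BG^{-1}B)^{-1}\succ 0$), which again rests on the passivity argument above, as in \cite{Zampieri}.

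Finally, for uniqueness, suppose $\green'$ also satisfies \eqref{eq:Xproperties} and set $D=\green_\sigma-\green'$, so $D\adm_\sigma=0$. Every row of $D$ then lies in the left kernel of $\adm_\sigma$, which by symmetry coincides with $\ker\adm_\sigma$, and this equals $\operatorname{span}(\1)$ since $\graph$ is connected; hence $D=c\,\1^{T}$ for some $c\in\complexnumbers^{\nonodes}$. But $De_1=0$ gives $c=c(\1^{T}e_1)=0$, so $D=0$. Note that this last part uses neither the symmetry nor the semidefiniteness of the matrix, only the two defining equations.
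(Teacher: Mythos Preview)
Your argument is correct. Note, however, that the paper does not actually prove this lemma: it is stated with a citation to \cite{Zampieri} and used as a black box. Your block-partition route---showing the grounded submatrix $\widehat{\adm}$ is invertible via $\Re(\widehat{\adm})\succ 0$, zero-padding $\widehat{\adm}^{-1}$ to build $\green_\sigma$, and reading off uniqueness from $\ker\adm_\sigma=\operatorname{span}(\1)$---is exactly the standard construction in the cited reference, so there is nothing to contrast.

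One small point worth making explicit in your write-up: ``positive semidefinite'' is a slightly nonstandard label for a complex \emph{symmetric} (not Hermitian) matrix like $\green_\sigma$. Your computation $\Re(\widehat{\adm}^{-1})=(G+BG^{-1}B)^{-1}\succ 0$ is correct and shows that the real part of $\green_\sigma$ is positive semidefinite, which is the intended meaning in \cite{Zampieri}; you already anticipate this by deferring the semidefiniteness claim to that reference, so nothing is missing.
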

Applying Lemma \ref{lemma:X}, from \eqref{eq:nodevoltage} and \eqref{eq:PCCidealvoltgen} we can express the voltages as a function of the currents and of the nominal voltage $U_N$:
\begin{equation}
u=\green_\sigma i+\1 U_N
\label{eq:u=Xi}
\end{equation}
The following proposition (\cite{Zampieri}) provides an approximation of the relationship between voltages and powers.
\begin{proposition}
Consider the physical model described by the set of nonlinear equations \eqref{eq:nodevoltage}, \eqref{eq:PCCidealvoltgen},  \eqref{eq:nodeconstpwr} and \eqref{eq:u=Xi}.
Node voltages then satisfy
\begin{equation}
u
=
U_N \1 + 
\frac{1}{U_N}
\green_\sigma
\bar s
+
o\left(\frac{1}{U_N}\right)
\label{eq:approximate_solution}
\end{equation}
(the little-o notation means that $\lim_{U_N\rightarrow \infty} \frac{o(f(U_N))}{f(U_N)} = 0$).
\label{pro:approximation}
\end{proposition}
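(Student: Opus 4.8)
The plan is to recast the nonlinear power-flow equations \eqref{eq:nodevoltage}--\eqref{eq:nodeconstpwr} as a single fixed-point equation for $u$, to prove via a contraction argument that for $U_N$ large this equation has a locally unique solution lying close to the nominal profile $\1 U_N$, and then to read off the first-order correction by substituting the resulting coarse bound back into the equation once.

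First I would eliminate the currents. On the solution branch of interest every node voltage is close to $U_N$ and in particular nonzero, so \eqref{eq:nodeconstpwr} yields $i_v = \bar s_v/\bar u_v$ for $v\neq 1$. By Lemma \ref{lemma:X} the matrix $\green_\sigma$ is symmetric with $\green_\sigma e_1 = 0$, hence both its first row and its first column vanish; in particular the product $\green_\sigma i$ in \eqref{eq:u=Xi} does not involve $i_1$. I therefore define $\phi(u)\in\complexnumbers^\nonodes$ by $\phi(u)_v = \bar s_v/\bar u_v$ for $v\neq 1$ (setting $\phi(u)_1$ to anything, e.g. $\bar s_1/\bar u_1$) and rewrite \eqref{eq:u=Xi} as the closed equation
\begin{equation*}
u = \1 U_N + \green_\sigma\,\phi(u) ,
\end{equation*}
which is equivalent to the full system on the region $\{u_v\neq 0,\ v\neq 1\}$; the slack current $i_1$ is recovered afterwards from the first row of \eqref{eq:nodevoltage} and is irrelevant here.

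Next comes the contraction argument. Fix $\rho\in(0,1)$ and set $B=\{u\in\complexnumbers^\nonodes : \|u-\1 U_N\|\le \rho U_N\}$; on $B$ one has $|u_v|\ge(1-\rho)U_N$, so $\phi$ is well defined, $\|\phi(u)\|\le\|s\|/((1-\rho)U_N)$, and $\phi$ is Lipschitz on $B$ with constant $O(1/U_N^2)$ --- the point being that $\green_\sigma$ and $s$ are fixed and do not grow with $U_N$. Hence the map $u\mapsto\1 U_N+\green_\sigma\phi(u)$ sends $B$ into $B$ and is a contraction once $U_N$ is large enough, and the Banach fixed-point theorem supplies a unique fixed point $u^\star\in B$ --- the ``high-voltage'' solution whose asymptotics we are after. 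From $u^\star-\1 U_N=\green_\sigma\phi(u^\star)$ and the bound on $\phi$ we immediately get $u^\star=\1 U_N+O(1/U_N)$, so $u^\star_v=U_N\bigl(1+O(1/U_N^2)\bigr)$ and thus $1/\bar u^\star_v=1/U_N+O(1/U_N^3)$, i.e. $\phi(u^\star)_v=\bar s_v/U_N+O(1/U_N^3)$. Substituting this back gives
\begin{equation*}
u^\star=\1 U_N+\green_\sigma\Bigl(\tfrac1{U_N}\bar s+O(1/U_N^3)\Bigr)=U_N\1+\tfrac1{U_N}\green_\sigma\bar s+O(1/U_N^3),
\end{equation*}
and since $O(1/U_N^3)=o(1/U_N)$ this is precisely \eqref{eq:approximate_solution}.

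The main obstacle is conceptual rather than computational: the nonlinear equations \eqref{eq:nodevoltage}--\eqref{eq:nodeconstpwr} generally have several solutions, so one must first single out the physically meaningful branch --- the one that collapses onto $\1 U_N$ as $U_N\to\infty$ --- and justify that it is well defined, which is exactly what the contraction on $B$ provides, as long as the iterates are kept in the region where voltages stay bounded away from $0$ so that $\phi$ cannot blow up. Once that is in place, the remaining work is the elementary estimate above together with the bookkeeping around the slack-bus index.
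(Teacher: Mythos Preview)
Your argument is correct. The fixed-point reformulation $u=\1 U_N+\green_\sigma\phi(u)$, the contraction on the ball $B$ for large $U_N$, and the one back-substitution to extract the $\tfrac{1}{U_N}\green_\sigma\bar s$ term are all sound; the bookkeeping around the slack index via $\green_\sigma e_1=0$ is handled cleanly.

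As for comparison: the paper does \emph{not} actually prove this proposition. It is imported from \cite{Zampieri} and accompanied only by the one-line remark that ``Equation \eqref{eq:approximate_solution} is derived basically from a first order Taylor expansion w.r.t.\ the nominal voltage $U_N$ of the relation among powers and voltages.'' Your proof is therefore strictly more than what the paper offers. The informal Taylor-expansion description corresponds to your back-substitution step (expand $1/\bar u_v$ around $1/U_N$); what you add is the existence/uniqueness of the high-voltage branch via Banach's fixed-point theorem, which the paper's sketch takes for granted. If one only wants the asymptotic form and is willing to \emph{assume} a solution close to $\1 U_N$ exists, the Taylor heuristic suffices; your contraction argument buys a self-contained statement that also identifies which solution is being expanded.
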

Equation \eqref{eq:approximate_solution} is derived basically from a first order Taylor expansion w.r.t. the nominal voltage $U_N$ of the relation among powers and voltages. Even if the solutions of \eqref{eq:approximate_solution} could be not very precise, and then its application to the power flow computation could be not satisfactory, it has been already used with success in state estimation \cite{schenato2014bayesian},  Volt/Var optimization \cite{cavraroTAC2015}, and the optimal power flow problem \cite{cavraro2014cdc}. 
%
%
%

\section{Identification of Switching Actions }
\label{sec:propagation_action_switches}

The basic idea behind our proposed approach is that changes of the breakers status will create specific signatures in the grid voltages.
In order to develop the theoretical base for the proposed algorithm, we make the following assumptions. 
%
%
\begin{assumption}
\label{ass:sameR/X}
We assume that all the lines have the same resistance over reactance ratio. Therefore, $\Im(Y_{jk}) = \alpha \Re(Y_{jk}), \forall \; Y_{jk}$.
\end{assumption}
\begin{assumption}
\label{ass:statuschng}
We assume that only one switch can change its status at each time.
\end{assumption}
\begin{assumption}
\label{ass:connection}
We assume that the graph associated with the grid is always connected, i.e. that there are no admissible state in which any portion of the grid remains disconnected.
\end{assumption}
%
Assumption \ref{ass:sameR/X} will be relaxed in Section \ref{sec:results}, in order to test the algorithm in a more realistic scenario. However, it allows us to write the bus admittance matrix in a simpler way. In fact, due also to $\adm_{\sigma}$ symmetry,
	we can write
\begin{equation}
\adm_{\sigma} = \Re(\adm_{\sigma}) + i \Im(\adm_{\sigma})  = ( 1 + i \alpha) U \Sigma_R U^*
\end{equation}
where $\Sigma_R$ is a diagonal matrix whose diagonal entries are the non-zero eigenvalues of $\Re(\adm_{\sigma})$, and $U$ is an orthonormal matrix that includes all the associated eigenvector. From \eqref{eq:admittance_null_space}, it can be shown that $U$ spans the image of $I - \1 \1^T/ (\1^T \1)$, i.e. the space orthogonal to $\1$. The matrix $\green_\sigma$ can be written as 
\begin{equation}
\green_\sigma = (1 + i \alpha)^{-1} \Lambda U \Sigma_R^{-1} U^* \Lambda^T
\label{eq:MoorePenros}
\end{equation}
with $\Lambda = (I - \1 e_1^T)$. Assumption \ref{ass:statuschng} is instead reasonable for the proposed algorithm framework: it works in the scale of some seconds, while typically the switches are electro-mechanical devices and their actions are not simultaneous. Finally, Assumption \ref{ass:connection} is always satisfied during the normal operation.
We introduce the main idea that underlies our algorithm with the following example.
\begin{example}
\label{ex:es1}
Assume that at time $t-1$ the switches status is described by $\sigma(t-1) =  \sigma_1$, resulting in the topology $\topology_{\sigma_1}$ with bus admittance matrix $\adm_{\sigma_1}$. Applying Proposition~\ref{pro:approximation} and neglecting the infinitesimal term, we can express the voltages as
\begin{equation}
{u}(t-1)=
\green_{\sigma_1} 
\frac{\bar s }{U_N} + \1 U_N
\label{eq:ux1}
\end{equation}
At time $t$ the $\ell$-th switch, that was previously open, changes its status. Let the new status be described by $\sigma(t) =  \sigma_2$, associated with the topology $\topology_{\sigma_2}$. Since we are basically adding to the graph representing the grid the edge on which switch $\ell$ is placed, we can write 
\begin{equation}
\adm_{\sigma_2} = \adm_{\sigma_1} + Y_{\ell} a_{\ell} a_{\ell}^T
\end{equation}
where  $Y_{\ell}$ is the admittance of the line, and the elements of $a_{\ell}$ is the $\ell$-th row of the adjacency matrix associated with $\topology_{\sigma_2}$. %
Since $ a_{\ell} $ is orthogonal to  $\1$, there exists $b_{\ell}$ such that $Ub_{\ell} = a_{\ell}$. This allow us to write
\begin{small}
\begin{align}
\adm_{\sigma_2} &= (1 + i\alpha)  U (\Sigma_R + \Re(Y_{\ell}) b_{\ell} b_{\ell}^T) U^* \notag\\
\green_{\sigma_2} &= (1 + i\alpha)^{-1} \Lambda U (\Sigma_R + \Re(Y_{\ell}) b_{\ell} b_{\ell}^T)^{-1} U^* \Lambda^T \label{eq:X2}
\end{align}
\end{small}
The voltages are
\begin{equation}
{u}(t) =
\green_{\sigma_2} \frac{\bar s }{U_N}
+ \1 U_N
\label{eq:ux2}
\end{equation}
The trend vector $\delta(t,t-1)$ can thus be written as
\begin{equation}
\delta(t,t-1) = (\green_{\sigma_2} - \green_{\sigma_1}) \frac{\bar s }{U_N}
\label{eq:Delta1}
\end{equation}
\end{example}
\begin{example}
Consider now the opposite situation:  at time $t-1$ the $\ell$-th switch is closed and it is opened at time $t$, i.e. $\sigma(t-1) = \sigma_2, \sigma(t) = \sigma_1$. In this case we have that 
\begin{align}
\delta(t,t-1) & =  (\green_{\sigma_1} - \green_{\sigma_2}) \frac{\bar s }{U_N} \label{eq:Delta2}
\end{align}
\end{example}

We can observe that when there is a switching action, the voltage profile varies in a way determined by the particular transition from a topology to another. 
Furthermore, opening or closing a determined breaker makes the system vary in the opposite way, e.g. compare \eqref{eq:Delta1} with \eqref{eq:Delta2}.
This is basically due to the fact that $[\sigma_1]_{-\ell} = [\sigma_2]_{-\ell}$. 

Given a breaker status $\sigma$ 
associated with the topology $\adm_\sigma = (1 + i \alpha) U \Sigma_R U^*$, and fixed an edge $\ell$ endowed with a breaker, associated with the row $a_\ell$ in the incidence matrix,  we define the \emph{signature matrix} $\Phi_{\sigma_{-\ell}}$ as 
\begin{equation}
\Phi_{\sigma_{-\ell}} = U \Sigma_R^{-1} U^* - U (\Sigma_R + \Re(Y_{\ell}) b_{\ell} b_{\ell} ^T)^{-1} U^*
\end{equation}
Exploiting the signature matrix we can write
\begin{equation}
\delta(t,t-1) = \Lambda \Phi_{\sigma(t)_{-\ell}} \Lambda^T \frac{\bar s }{U_N}
\label{eq:closesw}
\end{equation}
if the switch $\ell$ has been closed, else 
\begin{equation}
\delta(t,t-1) = - \Lambda \Phi_{\sigma(t)_{-\ell}} \Lambda^T \frac{\bar s }{U_N}
\label{eq:opensw}
\end{equation}
if the switch $\ell$ has been opened. 
The following Proposition show a characteristic of the signature matrix that is fundamental for  the development of our topology detection algorithm.
\begin{proposition}
For every transition from the state described by $\sigma(t-1)$ to the one described by $\sigma(t)$, due to a action of the switch $\ell$, $\Phi_{\sigma(t)_{-\ell}}$ is a rank one matrix.
\label{pro:Phi_rank}
\end{proposition}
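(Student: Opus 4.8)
The plan is to reduce the statement to the Sherman--Morrison rank-one update formula. By the definition of the signature matrix,
\[
\Phi_{\sigma(t)_{-\ell}} = U\bigl(\Sigma_R^{-1} - (\Sigma_R + \Re(Y_{\ell}) b_{\ell} b_{\ell}^T)^{-1}\bigr)U^* .
\]
Since $U$ has orthonormal columns, and hence full column rank, it suffices to prove that the inner matrix
\[
M = \Sigma_R^{-1} - (\Sigma_R + \Re(Y_{\ell}) b_{\ell} b_{\ell}^T)^{-1}
\]
has rank one: left multiplication by $U$ and right multiplication by $U^*$ preserves the rank.

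First I would make sure $M$ is well defined. The diagonal matrix $\Sigma_R$ collects the non-zero eigenvalues of $\Re(\adm_\sigma)$, which is a weighted graph Laplacian, hence positive semidefinite with one-dimensional kernel spanned by $\1$ (recall \eqref{eq:admittance_null_space} and Assumption~\ref{ass:connection}); therefore $\Sigma_R$ is positive definite. Because $\Re(Y_{\ell}) > 0$ for any physical line, $\Sigma_R + \Re(Y_{\ell}) b_{\ell} b_{\ell}^T$ is again positive definite, so it is invertible and Sherman--Morrison applies:
\[
(\Sigma_R + \Re(Y_{\ell}) b_{\ell} b_{\ell}^T)^{-1} = \Sigma_R^{-1} - \frac{\Re(Y_{\ell})\,\Sigma_R^{-1} b_{\ell} b_{\ell}^T \Sigma_R^{-1}}{1 + \Re(Y_{\ell})\,b_{\ell}^T \Sigma_R^{-1} b_{\ell}} .
\]
Substituting this into the expression for $M$ yields
\[
M = \frac{\Re(Y_{\ell})}{1 + \Re(Y_{\ell})\,b_{\ell}^T \Sigma_R^{-1} b_{\ell}}\,\bigl(\Sigma_R^{-1} b_{\ell}\bigr)\bigl(\Sigma_R^{-1} b_{\ell}\bigr)^T ,
\]
which is an outer product of a vector with itself, hence of rank at most one.

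It remains to check that this rank is exactly one. Since $U b_{\ell} = a_{\ell}$ and $a_{\ell}$, a row of the incidence matrix, is non-zero, we have $b_{\ell} \neq 0$, and therefore $\Sigma_R^{-1} b_{\ell} \neq 0$. Moreover the scalar prefactor is strictly positive: $\Re(Y_{\ell}) > 0$ and $b_{\ell}^T \Sigma_R^{-1} b_{\ell} > 0$ because $\Sigma_R^{-1}$ is positive definite and $b_{\ell} \neq 0$, so the denominator exceeds $1$. Hence $M$ — and with it $\Phi_{\sigma(t)_{-\ell}} = U M U^*$ — has rank exactly one. As a by-product the computation also shows that $\Phi_{\sigma(t)_{-\ell}}$ is symmetric positive semidefinite and that, by \eqref{eq:closesw}--\eqref{eq:opensw}, the trend vector $\delta(t,t-1)$ is, up to sign, a scalar multiple of the fixed vector $\Lambda U \Sigma_R^{-1} b_{\ell}$, i.e.\ of the ``direction'' that the detection algorithm will look for.

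I do not expect a genuine obstacle here: the only points that need a little care are the well-definedness of the two inverses (handled by the positive definiteness of $\Sigma_R$) and the non-degeneracy of the rank-one factor (handled by $b_{\ell} \neq 0$ and the strict positivity of the scalar prefactor). The heart of the argument is a single application of the Sherman--Morrison identity.
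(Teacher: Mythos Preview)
Your proof is correct and follows the same route as the paper: the paper invokes ``K.~Miller Lemma'' \cite{miller1981inverse}, which is precisely the Sherman--Morrison rank-one update you apply, to obtain $\Phi_{\sigma(t)_{-\ell}} = \mu\, U \Sigma_R^{-1} b_{\ell} b_{\ell}^T \Sigma_R^{-1} U^*$ and then reads off the eigenvector $\hat g_{\sigma(t)_{-\ell}} = U \Sigma_R^{-1} b_{\ell}$. If anything, your write-up is more careful than the paper's, since you explicitly verify the invertibility of $\Sigma_R$ and of the perturbed matrix, and you check that the rank is exactly one rather than zero.
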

\begin{proof}
Following the same reasoning of Example \ref{ex:es1}, and exploiting 
K. Miller Lemma (\cite{miller1981inverse}), after some simple computations we can write
\begin{equation}
\Phi_{\sigma(t)_{-\ell}} = \mu U \Sigma_R^{-1} b_{\ell} b_{\ell}^T \Sigma_R^{-1} U^*
\label{eq:PhiMiller}
\end{equation}
with 
$$\mu = \frac{1}{1 + \text{Tr}(\Re(Y_\ell) b_{\ell} b_{\ell}^T\Sigma_R)}$$
It's trivial to see that $\Phi_{w(t)_{-\ell}}$ is a rank one matrix with eigenvector 
\begin{equation}
\hat g_{\sigma(t)_{-\ell}} = U \Sigma_R^{-1} b_{\ell} 
\label{eq:eigvct}
\end{equation}
associated with the non-zero eigenvalue
\begin{equation}\lambda_{\sigma(t)_{-\ell}} = \mu \|U \Sigma_R^{-1} b_{\ell} \|^2
\end{equation}
%
Thus, we have
$$\Phi_{\sigma(t)_{-\ell}} = \lambda_{\sigma(t)_{-\ell}} \hat g_{\sigma(t)_{-\ell}} \hat g_{\sigma(t)_{-\ell}}^*$$
\end{proof}
The trend vector $\delta(t,t-1)$ represents how the opening or the closure of a switch spreads on the voltages profile. 
Thanks to Proposition \ref{pro:Phi_rank} we can write it as
\begin{align}
\delta(t,t-1) &= 
= \left[\lambda_{\sigma(t)_{-\ell}}  \hat g_{\sigma(t)_{-\ell}}^* \Lambda^T \frac{\bar s}{U_N} \right] \Lambda \hat g_{\sigma(t)_{-\ell}} 
\label{eq:trend_prop}
\end{align}
from which we see that 
%
$\delta(t,t-1) \propto \Lambda \hat g_{\sigma(t)_{-\ell}},$
%
i.e., every pattern that appears on the voltage profile due to  switching actions is proportional to $\Lambda \hat g_{\sigma(t)_{-\ell}}$, irrespective of other variables such as voltages $u$ and loads $s$ that describe the network operating state at the time. Thus, $g_{\sigma(t)_{-\ell}}$ can be seen as the \emph{particular signature} of the switch action. This fact is the cornerstone for the topology detection algorithm in this paper.
%
%
%
\begin{remark}
\label{rem:2}
The opening and the closure of the switch $\ell$, once the other switches status is fixed, share the same signature $g_{\sigma(t)_{-\ell}}$, thus in principle they are indistinguishable. Without any other information, the trend vector can just identify witch switch has changed its status.
\end{remark}

%
\section{Topology Detection Algorithm}
\label{sec:algorithm_statement}


If we assume the distribution network physical infrastructure known, i.e. conductor impedances and switch locations, we can build a  \emph{library} $\mathcal L$ in which we collect all the normalized products between $\Lambda$ and the eigenvectors \eqref{eq:eigvct} for all possible breaker actions
\begin{equation}
\mathcal L = \left\{ g_{w_{-\ell}} : w \text{ satisfies Assumption \ref{ass:connection}} \right\}
\label{eq:library}
\end{equation}
where
\begin{equation}
g_{\sigma(t)_{-\ell}} = \frac{\Lambda \hat g_{\sigma(t)_{-\ell}}}{\|\Lambda \hat g_{\sigma(t)_{-\ell}}\|}
\label{eq:wgvct_lib}
\end{equation}
It is natural to compare at each time the trend vector $\delta(t,t-1)$ with the elements in the library to identify if and which switch changed its status. As stated in Remark \ref{rem:2}, if we want to identify which is the current topology we need additional information, i.e. the knowledge of the topology before the transition.
In that case, we could compare the trend vector with a restricted portion of the library $\mathcal L$, since there are only $\noswitches$ possible transitions, each of one caused by the action of one of the $\noswitches$ breakers. As a consequence, we can compare $\delta(t,t-1)$ with the \emph{particular library}
\begin{equation}
\mathcal L_{\sigma(t-1)} = \{ g_{\sigma(t)_{-\ell}} : \sigma(t)_{-\ell} = \sigma(t-1)_{-\ell} \}
\label{eq:part_library}
\end{equation}
that is peculiar of the state $\sigma(t-1)$ before the transition. 
The comparison is made by projecting the normalized measurements-based trend vector $\frac{\delta(t,t-1)}{\| \delta (t,t-1) \|}$ onto the topology library $\mathcal L_{\sigma(t-1)}$. The projection is performed with the inner product, and it allows us to obtain for each vector in  $\mathcal L_{\sigma(t-1)}$
\begin{equation}
c_{\sigma(t)_{-\ell}} = \left \| \left \langle \frac{\delta}{\| \delta \|}, g_{\sigma(t)_{-\ell}} \right \rangle \right \|.
\label{eq:proj}
\end{equation}
If $c_{\sigma(t)_{-\ell}} \simeq 1$, it means that $\delta$ is spanned by $g_{\sigma(t)_{-\ell}}$ and then that the switch $\ell$ changed its status. Because of the approximation \eqref{eq:approximate_solution}, the projection will never be exactly one. Therefore, we will use a heuristic threshold, called \emph{min{\textunderscore}proj}, based on our numerous simulations to select the right breaker.
If the projection is greater than the threshold, the associated switch is selected and the topology change time is detected. If there is no switches action, the trend vector will be zero as all the  $c_{\sigma(t)_{-\ell}}$, and the algorithm will not reveal any topology transition.
Thus, the projection value is used by the algorithm to detect the change time too, differently of what proposed in \cite{cavraroISGT2015}, where we used the norm of a matrix built by measurements (the \emph{trend matrix}). The new approach is more reliable in the realistic case. 
With a slight abuse of notation, we will say that the maximizer of $\mathcal C$, denoted by $\max \mathcal C$, is the switches status $\sigma$ such that 
$[\sigma]_{-\ell} = \sigma(t)_{-\ell}$, $[\sigma]_{\ell} = 1$ if $\sigma(t)_{-\ell}=0$ or vice-versa $[\sigma]_{\ell} = 0$ if $\sigma(t)_{-\ell}=1$ and $c_{\sigma(t)_{-\ell}}$ its the maximum element in $\mathcal C$.

We tacitly assumed so far all the buses endowed with a PMU, just to show the main idea. But this is not a realistic scenario for a distribution network. Now we show how the former approach can be generalized in presence of limited information, in which we are allowed to take only a few voltage measures: 
\begin{align}
y &= I_{\mathcal P} u \notag \\
&= I_{\mathcal P} \green_{\sigma(t)} \frac{\bar s }{U_N} + \1 U_N
\label{eq:fewmeas}
\end{align}
where $I_{\mathcal P} \in [0,1]^{\nosensors \times \nonodes}$ is a matrix that select the entries of $u$ where a PMU is placed.
In that case, the trend vector becomes
\begin{equation}
\delta(t_1,t_2) = y(t_1) - y(t_2)
\label{eq:trendvect_fewpmu}
\end{equation}
The library vectors and their dimension change too. In fact one can easily show, using \eqref{eq:fewmeas} and retracing \eqref{eq:ux2} and \eqref{eq:PhiMiller} that \eqref{eq:wgvct_lib} becomes
\begin{equation}
g_{\sigma(t)_{-\ell}} = \frac{I_{\mathcal P}\Lambda\hat g_{\sigma(t)_{-\ell}}}{\|I_{\mathcal P}\Lambda \hat g_{\sigma(t)_{-\ell}}\|}
\label{eq:egvct_lib}
\end{equation}
The topology detection algorithm with limited measurements is stated in Algorithm \ref{alg:DetAlg}.
\begin{small}
\begin{algorithm}
\caption{Topology Changes Detection}
	\begin{algorithmic}[1]
    \Require At each time $t$, $\sigma(t-1)$, \emph{min{\textunderscore}proj} = 0.98  
		\State $\sigma(t) \leftarrow \sigma(t-1)$ 
		
		\State each PMU at each node $j$ record voltage phasor measurements $y_j(t)$
		\State the algorithm builds the trend vector $\delta(t,t-1)$
		\State the algorithm projects $\delta(t,t-1)$ in the library $\mathcal L_{\mathcal P,\sigma(t)} $ obtaining the set of values $$\mathcal C = \left \{c_{\sigma(t)_{-\ell}} = \left \| \left \langle \frac{\delta}{\| \delta \|},g_{\sigma(t)_{-\ell}} \right \rangle \right \|, g_{\sigma(t)_{-\ell}} \in \mathcal L \right \};$$
		\If{$\max \mathcal C \geq \emph{min{\textunderscore}proj}$ } 
			\State $\sigma(t)  \leftarrow \arg \max \mathcal C$
		\EndIf
		\end{algorithmic}
\label{alg:DetAlg}
\end{algorithm}
\end{small}

\section{PMUs placement}
\label{sec:placement}
When we have a limited number of sensors to be deployed in the distribution grid, the first requirement to be satisfied is the system \emph{observability}, i.e.  the algorithm capability of detecting every topology transition.
Since our algorithm is basically based on the comparison between trend vectors and the library vectors, the trivial condition for the observability of the network is that each vector of the library is not proportional to any of the others. The former property is equivalent to the following condition.
\begin{lemma}
Given the set of nodes endowed with PMUs $\mathcal P$, let, with a slight abuse of notation, the juxtaposition of the library vectors be denoted by $\mathcal L$. Let $\mathcal L_{\mathcal P} = I_{\mathcal P} \mathcal L$. Then if 
\begin{equation}
(|\mathcal L_{\mathcal P}^* \mathcal L_{\mathcal P}|)_{uv} < 1, \forall u \neq v
\label{eq:lemm1_obsv}
\end{equation}
the switch that changes its status can be identified.
\label{lem:obsv1}
\end{lemma}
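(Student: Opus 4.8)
The plan is to prove that, whenever a switch acts, the normalized trend vector coincides with one and only one library vector — the one attached to that switch — up to an irrelevant unit–modulus complex factor, while its inner product with every other library vector has modulus strictly below one; consequently the index returned by the update of Algorithm~\ref{alg:DetAlg} (equivalently, the unique library vector onto which $\delta$ projects with coefficient $1$) identifies the switch.

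First I would rewrite the trend vector for the limited–measurement model. Assume the transition from $\sigma(t-1)$ to $\sigma(t)$ is produced by switch $\ell$. Retracing the computation that leads to \eqref{eq:trend_prop}, now starting from \eqref{eq:fewmeas} instead of the full–measurement relation and using Proposition~\ref{pro:Phi_rank}, one obtains
\[
\delta(t,t-1)=\gamma\, I_{\mathcal P}\Lambda\,\hat g_{\sigma(t)_{-\ell}},\qquad
\gamma=\pm\,\lambda_{\sigma(t)_{-\ell}}\,\hat g_{\sigma(t)_{-\ell}}^{*}\Lambda^{T}\frac{\bar s}{U_N}\in\complexnumbers ,
\]
the sign depending on whether $\ell$ was closed or opened. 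For a generic loading $s$ the complex scalar $\gamma$ is nonzero, so, by the definition \eqref{eq:egvct_lib} of the library vectors,
\[
\frac{\delta(t,t-1)}{\|\delta(t,t-1)\|}=\frac{\gamma}{|\gamma|}\; g_{\sigma(t)_{-\ell}},\qquad \Big|\tfrac{\gamma}{|\gamma|}\Big|=1 .
\]

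Next I would compute the projection coefficients \eqref{eq:proj} generated by the algorithm. Since $\gamma/|\gamma|$ has unit modulus, for every admissible action $\ell'$
\[
c_{\sigma(t)_{-\ell'}}=\Big|\Big\langle \tfrac{\delta}{\|\delta\|},\,g_{\sigma(t)_{-\ell'}}\Big\rangle\Big|
=\big|\langle g_{\sigma(t)_{-\ell}},\,g_{\sigma(t)_{-\ell'}}\rangle\big|
=\big(|\mathcal L_{\mathcal P}^{*}\mathcal L_{\mathcal P}|\big)_{\ell\ell'},
\]
that is, the projections are precisely the moduli of the entries, along the row indexed by the true action, of the Gram matrix $\mathcal L_{\mathcal P}^{*}\mathcal L_{\mathcal P}$. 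The diagonal entry ($\ell'=\ell$) equals $1$ because each $g_{\sigma(t)_{-\ell}}$ is a unit vector by \eqref{eq:egvct_lib}; by hypothesis \eqref{eq:lemm1_obsv}, every off–diagonal entry is strictly less than $1$. Hence $\ell'=\ell$ is the \emph{unique} maximizer of $\mathcal C$, the maximum equals $1$, and all spurious coefficients lie strictly below $1$; in particular any threshold \emph{min{\textunderscore}proj} chosen in $(\max_{\ell'\neq\ell}c_{\sigma(t)_{-\ell'}},\,1]$ makes the update of Algorithm~\ref{alg:DetAlg} return the switch $\ell$ that actually changed. This proves the statement.

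Finally I would observe that condition \eqref{eq:lemm1_obsv} is also necessary in the obvious sense: by Cauchy--Schwarz $|\langle g_u,g_v\rangle|\le\|g_u\|\,\|g_v\|=1$, with equality exactly when $g_u$ and $g_v$ are proportional, so \eqref{eq:lemm1_obsv} is nothing but the requirement that no two distinct library vectors be parallel — the minimal condition under which two different switching actions produce trend vectors pointing in different directions. The only delicate point — and the closest thing to an obstacle — is the bookkeeping of the complex factor $\gamma$: because the scalar multiplying the signature in \eqref{eq:trend_prop} is complex, the normalized trend vector agrees with the library vector only up to a phase, which is why it is the \emph{modulus} of the inner product that must be compared (as \eqref{eq:proj} indeed does), and one should flag the measure–zero set of loadings for which $\gamma=0$ and the trend vector vanishes despite a genuine switching action.
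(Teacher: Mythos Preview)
Your proof is correct and follows essentially the same route as the paper: the paper does not give a formal proof but simply remarks, immediately after the lemma, that $(\mathcal L_{\mathcal P}^{*}\mathcal L_{\mathcal P})_{uv}$ is the projection of one library vector onto another, so \eqref{eq:lemm1_obsv} is just the statement that no two library vectors are proportional --- which, combined with the previously derived proportionality $\delta\propto I_{\mathcal P}\Lambda\hat g_{\sigma(t)_{-\ell}}$, makes the acting switch identifiable. Your write-up supplies the details (the explicit Gram-matrix identification of the $c_{\sigma(t)_{-\ell'}}$, the handling of the complex phase, the $\gamma=0$ caveat) that the paper leaves implicit.
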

Notice that the element $(\mathcal L_{\mathcal P}^*\mathcal L_{\mathcal P})_{uv}$ is simply the projection of $g_{[\sigma(t)]_{-u}}$ onto $g_{[\sigma(t)]_{-v}}$. If they are note purely proportional, trivially the magnitude of their inner product is smaller than one. 
Condition \eqref{eq:lemm1_obsv} can be too restrictive, if we know the switches status before the topology change. In that case, in fact, we can just check if each vector of the \emph{particular} library is not proportional to any of the others. The former property is equivalent to the following condition.
\begin{lemma}
Let the switches status be $\sigma$, and let it be known. Given the set of nodes endowed with PMUs $\mathcal P$, let, with a slight abuse of notation, the juxtaposition of the particulary library vectors be denoted by $\mathcal L_{\sigma}$. Let $\mathcal L_{\mathcal P,\sigma} = I_{\mathcal P} \mathcal L_{\sigma}$. Then if 
\begin{equation}
(|\mathcal L_{\mathcal P,\sigma}^* \mathcal L_{\mathcal P,\sigma}|)_{uv} < 1, \forall u \neq v, \forall \sigma
\label{eq:lemm2_obsv}
\end{equation}
each switch action can be identified.
\label{lem:obsv2}
\end{lemma}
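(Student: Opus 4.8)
The plan is to argue that condition \eqref{eq:lemm2_obsv} is exactly the statement that, for each fixed pre-transition status $\sigma$, the normalized library vectors indexed by the $\noswitches$ possible switch actions are pairwise non-proportional, and that this pairwise non-proportionality is equivalent to identifiability of the switch that acted. The proof is essentially a restatement argument combined with the observation made just before the lemma about the meaning of the Gram matrix entries.

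First I would recall the algorithm's decision rule: given the measured trend vector $\delta(t,t-1)$ and the known pre-transition status $\sigma=\sigma(t-1)$, the algorithm computes the projections $c_{\sigma(t)_{-\ell}} = \|\langle \delta/\|\delta\|, g_{\sigma(t)_{-\ell}}\rangle\|$ over all $\ell$ and selects the maximizer. By Proposition~\ref{pro:Phi_rank} and equation~\eqref{eq:trend_prop}, when switch $\ell^\star$ actually changes status the trend vector is (up to the $o(1/U_N)$ term) proportional to $I_{\mathcal P}\Lambda \hat g_{\sigma(t)_{-\ell^\star}}$, hence to the normalized library vector $g_{\sigma(t)_{-\ell^\star}}\in\mathcal L_{\mathcal P,\sigma}$. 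In that idealized regime $c_{\sigma(t)_{-\ell^\star}}=1$ by Cauchy--Schwarz with equality, so the true switch attains the maximum value $1$.

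Next I would show that this maximizer is \emph{unique} precisely under \eqref{eq:lemm2_obsv}. If $\delta \propto g_{\sigma(t)_{-\ell^\star}}$, then for any other index $\ell$ we have $c_{\sigma(t)_{-\ell}} = |\langle g_{\sigma(t)_{-\ell^\star}}, g_{\sigma(t)_{-\ell}}\rangle|$, which (identifying the columns of the juxtaposed matrix $\mathcal L_{\mathcal P,\sigma}$ with these vectors) is exactly the off-diagonal Gram entry $(|\mathcal L_{\mathcal P,\sigma}^*\mathcal L_{\mathcal P,\sigma}|)_{\ell^\star\ell}$. Since each column is a unit vector, Cauchy--Schwarz gives $c_{\sigma(t)_{-\ell}}\le 1$ with equality iff $g_{\sigma(t)_{-\ell}}$ is a scalar multiple of $g_{\sigma(t)_{-\ell^\star}}$. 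Therefore \eqref{eq:lemm2_obsv}, asserting strict inequality for all $\ell\neq\ell^\star$ and all $\sigma$, is equivalent to saying that the true switch is the strict unique maximizer of $\mathcal C$ for every pre-transition topology, i.e.\ every switch action can be identified. Conversely, if \eqref{eq:lemm2_obsv} fails for some $\sigma$ and some pair $u\neq v$, then $g_{\sigma_{-u}}$ and $g_{\sigma_{-v}}$ are proportional unit vectors, and a transition triggered by switch $u$ produces a trend vector indistinguishable (in projection) from one triggered by switch $v$, so identification fails. This two-sided reasoning also reuses verbatim the remark immediately following Lemma~\ref{lem:obsv1} that the Gram entry is the projection of one particular-library vector onto another.

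The main obstacle is not the algebra but making the equivalence precise in the presence of the approximation error: strictly speaking $\delta$ is only \emph{approximately} proportional to the true library vector because of the $o(1/U_N)$ term in \eqref{eq:approximate_solution}, so the clean Cauchy--Schwarz equality argument holds only in the idealized noiseless limit. I would handle this by stating the lemma as a structural (necessary) observability condition in that limit — which is how the paper frames it, via the heuristic threshold \emph{min{\textunderscore}proj} rather than an exact equality — and note that \eqref{eq:lemm2_obsv} guarantees a strictly positive separation $1 - \max_{u\neq v}(|\mathcal L_{\mathcal P,\sigma}^*\mathcal L_{\mathcal P,\sigma}|)_{uv}$ between the correct projection and the next-best one, so that for $U_N$ large enough (equivalently, perturbation small enough) the correct switch remains the maximizer. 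Beyond this, the argument is a direct unwrapping of definitions.
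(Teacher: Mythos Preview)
Your proposal is correct and follows exactly the approach the paper itself takes: the paper does not give a separate formal proof of Lemma~\ref{lem:obsv2} but relies on the remark after Lemma~\ref{lem:obsv1} that the Gram entry $(\mathcal L_{\mathcal P,\sigma}^*\mathcal L_{\mathcal P,\sigma})_{uv}$ is the inner product of two unit library vectors, so by Cauchy--Schwarz its magnitude is strictly less than one iff the vectors are non-proportional. Your argument simply unpacks this observation in full detail (including the converse and the handling of the $o(1/U_N)$ approximation), which is more than the paper itself supplies but is the same underlying idea.
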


Lemmas \ref{lem:obsv1} and \ref{lem:obsv2} can be used to infer, given the electrical grid model, which is the minimun number of PMUs to be deployed in order to have the observability. 

A second issue is to find an ``optimal'' placement. 
Now we propose a simple, even if onerous, strategy for placement. 
After finding a minimal number of sensors and a place that guarantee the satisfaction of Lemma \ref{lem:obsv1}, we propose a greedy PMU placement procedure based on the sequential addition of one PMU at a time able to provide the best performance improvement, verified by Monte Carlo simulations. 
For every possible new place for PMU, we run num{\textunderscore}run Monte Carlo simulation, of length TSTOP. For each of them we choose, randomly, the initial $\sigma$, the switch $\ell$ that changes its status and the time $\tau$ of the action. The place for the new PMU that performs the minimum number of errors is then chosen.

\begin{algorithm}
\caption{Greedy PMU placement}
    \begin{algorithmic}[1]
      \Require $\mathcal P $ such that the network is observable, num{\textunderscore}run, TSTOP 
			\State min $\leftarrow \infty$
			\State minimizer $\leftarrow \emptyset$
				\For{every possible place $j \notin \mathcal P$}
					\State err $ \leftarrow 0$
					\For{$t = 1$ to num{\textunderscore}run}
						\State choose $\sigma \sim \mathcal U([0,1]^\noswitches)$
						\State choose $\ell \sim \mathcal U([0,\dots,\noswitches])$
						\State choose $\tau \sim \mathcal U([0,\dots,\text{TSTOP}])$
						\State run Monte Carlo simulations
						\State err = number of errors
						\If{err $\leq$ min} 
							\State $\min \leftarrow $ err
							\State minimizer $\leftarrow j$
						\EndIf
					\EndFor
				\EndFor
			\State $\mathcal P \leftarrow \mathcal P \cup \{\text{minimizer}\} $ 
    \end{algorithmic}
\label{alg:greedy}
\end{algorithm}
\section{Switching action in the non-ideal scenario}
\label{sec:noisyscen}
So far, we considered the case in which the measurements devices were not affected by noise and the loads were static, which is not a realistic case.

The measurement apparatus in a bus where measurements are taken is formed by a PMU and by a potential transformer (PT). A PT for metering aims to reduce the voltage magnitude, in order to make it measurable by a PMU. Both the PMU and the PT introduce errors to the voltage measured. We model the output of our PMU placed at bus $j$ at time $t$ by 
\begin{small}
\begin{equation}
y_j(t) = u_j(t) + e_j(t) + b_j(t)
\label{eq:PMUmeasurements}
\end{equation}
\end{small}
where $e_j \in \complexnumbers$ is the error introduced by the PMU, while $b_j \in \complexnumbers$ is the errors introduced by the PT. 
A common characterization of the error is the \emph{total vector error} (TVE). For example if $x$ is the variable to be measured, and $x^N$ is the measured value, the TVE is 
\begin{equation}
\text{TVE} = \frac{|x - x^N|}{|x|}
\label{eq:tve}
\end{equation}
%

Furthermore, the loads are not static but they have a natural dynamic. 
In this paper we assume that the loads have\emph{ constant power factor}, and consequently 
\begin{equation}
\frac{q(t)_j}{p(t)_j} = \gamma_j , \quad \forall t,\;j = 2, \dots, \nonodes.
\end{equation}
We model the active power and the reactive power consumption at each load by
\begin{align}
p(t+1)_{-1} &= p(t)_{-1} + n_p(t) \label{eq:deltap}\\
q(t)_{-1} &= \text{diag}(\gamma_2,\dots,\gamma_\nonodes) p(t)_{-1}
\label{eq:deltaq}
\end{align}
where $n_p(t)$ is a Gaussian random vector. 
This could seems too rough and unsophisticated, but we will see in section \ref{sec:loadchar} that is actually enough accurate. 

If we take into account \eqref{eq:PMUmeasurements}, \eqref{eq:deltap} and \eqref{eq:deltaq}, the trend vector becomes
\begin{align}
\delta(t_1,t_2) &=   I_{\mathcal P}(\green_{\sigma(t_1)} - \green_{\sigma(t_2)}) \frac{\bar s (t_2)}{U_N} + e_{t_1} - e_{t_2} + \notag \\
& + b_{t_1} - b_{t_2} + \frac{I_{\mathcal P} \green_{\sigma(t_1)}}{U_N} \sum_{t = t_2}^{t_1-1} n_p(t) - i n_q(t),
\label{eq:noisy_trendvect}
\end{align}
being 
$ p(t_1)_{-1} = p(t_2)_{-1} + \sum_{t = t_2}^{t_1-1} n_p(t). $
%
Since a PT is a passive device and since in the following we will consider only fast measurement rate, we approximate satisfactorily the errors introduced by a PT with a constant, i.e. we will assume that $b_j(t) = b_j$. As a consequence, the term $b_{t_1} - b_{t_2}$ in \eqref{eq:noisy_trendvect} vanishes.
Because of measurements noise and loads dynamic, the trend vector is typically non-zero even if there has not been any switching action, and
 the projection \eqref{eq:proj} may be almost one, leading to false topology detection.
When any switch is closed or opened, since we are adding or deleting a branch, we are changing the currents flows, reflecting on an abrupt, greater voltages variation. Therefore, a first strategy to avoid false positive is to not consider trend vector whose norm is lower than a defined threshold, called in the following \emph{min{\textunderscore}norm}.
Moreover the additive noise can make the maximum projection value of the trend vector $\max \mathcal C$ onto the library vectors considerably lower than one, even if a topology change occurred. This fact prompts us to use a value for \emph{min{\textunderscore}proj} lower than the one considered in Algorithm~\ref{alg:DetAlg}.
Of course, the use of a lower threshold makes the algorithm more vulnerable to false positive. 
The following example will give the idea for a possible solution.
\begin{example}
\label{ex:es3}
Assume the grid without load variation and measurements noise, and that at time $t_1$ the $\ell$-th switch change its status. Consider the trend vector 
$$\delta(t,t-\tau) = y(t) - y(t-\tau).$$
For $t<t_1$ and $t \geq t_1 + \tau$ the projections of the trend vector onto the library are all equal to zero, because
$$\delta(t,t-\tau) = y(t) - y(t-\tau) = 0$$
Instead for $t_1 \leq t < t_1 + \tau$, the trend vector is
$$\delta(t,t-\tau) = \Lambda \Phi_{\sigma(t)_{-\ell}} \Lambda^T \frac{\bar s }{U_N}$$
leading to a cluster of algorithm time instant of length $\tau$ (or $\frac \tau f$ seconds),   in which the maximum projection coefficient will be almost one.
\end{example}

A possible solution is thus to consider a trend vector built using not two consecutive measures, but considering measures separated by $\tau$ algorithm time istants , i.e. 
$$\delta(t,t-\tau) = y(t) - y(t-\tau).$$
and to assume that a topology change has happened at time $t$ when we have a cluster whose length {length{\textunderscore}cluster} is $\tau$ of consecutive values of $\max \mathcal C$ greater than {min{\textunderscore}proj}.
This idea will be clarified with some simulations in Section \ref{sec:results}. 
The former observations lead to the Algorithm \ref{alg:DetAlg_noisy} for topology detection with measurements noise and load variation.
\begin{small}
\begin{algorithm}
\caption{Topology transition detection with noise}
    %
	\begin{algorithmic}[1]
  \Require At each time $t$, we are given the variables $\sigma(t-1)$, minimizer$(t-1)$, length{\textunderscore}cluster$(t-1)$  
	\State $\sigma(t) \leftarrow \sigma(t-1)$ 
		
	\State each PMU at each node $j$ record voltage phasor measurements $y_j(t)$
	\State the algorithm builds the trend vector $\delta(t,t-\tau)$
	\If{$\|\delta(t,t-\tau)\| <$ min{\textunderscore}norm} 
			\State $\delta(t,t-\tau)  \leftarrow 0$
			\State $\text{minimizer}(t) = 0$
			\State length{\textunderscore}cluster$(t)=0$
	\Else
			\State the algorithm projects $\delta(t,t-\tau)$ in the particular library $\mathcal L_{\mathcal P,\sigma(t)} $ obtaining the set of values $$\mathcal C = \left \{c_{\sigma(t)_{-\ell}} = \left \| \left \langle \frac{\delta}{\| \delta \|},g_{\sigma(t)_{-\ell}} \right \rangle \right \|, g_{\sigma(t)_{-\ell}} \in \mathcal L \right \};$$
			\If{$\max \mathcal C > $min{\textunderscore}proj } 
					\State $\text{minimizer}(t) = \arg \min \mathcal C$
					\If{$\text{minimizer}(t) = \text{minimizer}(t-1)$ } 
					\State length{\textunderscore}cluster$(t) \leftarrow$ length{\textunderscore}cluster$(t-1)$ + 1
							\If{length{\textunderscore}cluster$(t) = \tau$}
									\State $\sigma(t) \leftarrow \text{minimizer}(t) $
							\EndIf
					\Else 
					\State length{\textunderscore}cluster$(t) \leftarrow 1$
					\EndIf
			\EndIf
	\EndIf
	\end{algorithmic}
\label{alg:DetAlg_noisy}
\end{algorithm}
\end{small}

\section{Dynamic characterization of actual loads}
\label{sec:loadchar}
Load behavior at the individual customer level and at high time resolution can be a critical question in distribution networks. Due to lack of accurate and high resolution measurements at meters, there is no a clear answer to that question. The most commonly available data from loads come from meters with hourly (or 15 minutes) time intervals. 
For applications that are based on network parameters in shorter time than hourly, large uncertainty is caused by low resolution load data. 
For our topology detection algorithm, the load variation is very important, since it affects the topology detection capability, as \eqref{eq:noisy_trendvect} shows. To add more practicality to the proposed topology detection algorithm, a load measurement data set for five houses in the United States is used. 
Load demand (kW) is recorded every second for a week. Statistical analysis of these load data is presented in Table~\ref{tab:loadDataStat1}. 
\begin{figure}
\centering
\includegraphics[width=0.45\textwidth]{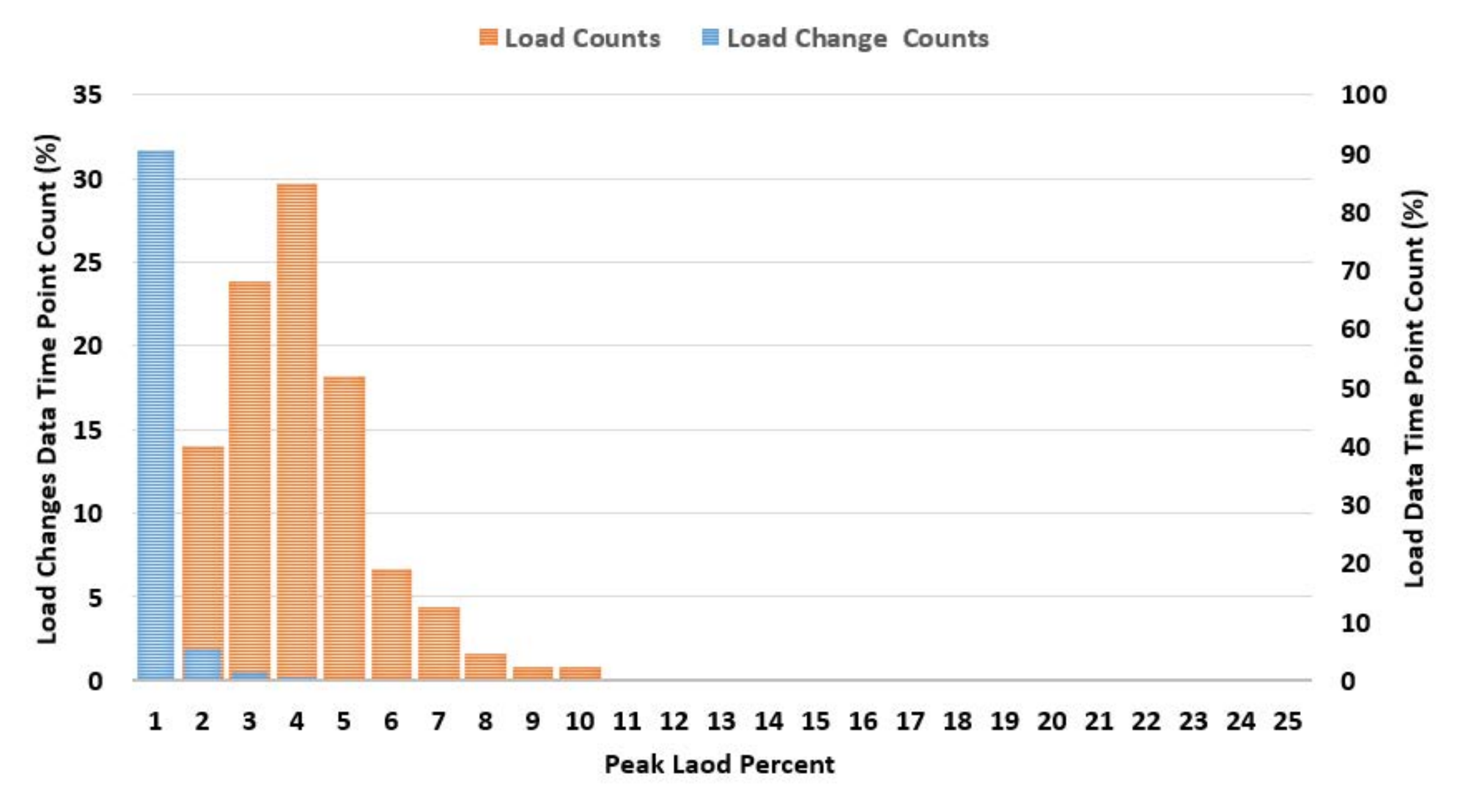}
\caption{Load variability and load change variability based on normalized load values.}
\label{fig:LoadDataHistogram}
\end{figure}	
Table~\ref{tab:loadDataStat2} reports instead the statistical analysis of the variation of the load demand between two consecutive seconds. The field ``Relative SD'' expresses the standard deviation of the load or of the load change as a percentage of the load peak value, written in Table~\ref{tab:loadDataStat1}.
\begin{table}
\caption{Load values}
\label{tab:loadDataStat1}
\centering
\begin{tabular}{lcccc}
           & SD (kW) & Max (kW) & Relative SD (\%) 	\\
House 1    & 0.415   & 5.990    & 6.93 \\
House 2    & 1.309   & 10.956		& 11.95\\
House 3    & 1.955   & 11.578		& 16.88\\
House 4    & 2.566   & 12.364		& 20.75\\
House 5    & 1.309   & 8.155		& 16.06\\
Aggregate  & 4.204   & 27.229   & 15.44
\end{tabular}
\end{table}
\begin{table}
\caption{Load variation}
\label{tab:loadDataStat2}
\centering
\begin{tabular}{lccc}
           & Mean (kW) & SD (kW) & Relative SD (\%) \\
House 1    & 0.000     & 0.045   & 0.11 \\
House 2    & 0.000     & 0.070   & 0.64 \\
House 3    & 0.000     & 0.113   & 0.98 \\
House 4    & 0.000     & 0.110   & 0.89 \\
House 5    & 0.000     & 0.046   & 0.56 \\
Aggregate  & 0.000     & 0.184   & 0.68 
\end{tabular}
\end{table}
Figure \ref{fig:LoadDataHistogram} shows the histogram of load duration and load changes. The orange bars represents measurements data points over the percentage of pick load values. The blue bars show load change in one second intervals over the percentage of pick load values. In case of load changes, more than 90 percent of measurement time points have loads with less that one percents of pick load. It means that during one second time intervals, there is not huge difference in load data. 
In the United States, a number of houses are connected to one distribution transformer. Therefore, the aggregated five houses loads are considered as the reference for load variability in this paper. 
The lower the measurements frequency, the higher the load variability. Resampling  the data, we obtain the aggregated characterization reported in Table \ref{tab:loadDataStat3} for different measurements frequencies, i.e. for one measure every second, one measure every five seconds and one measure every ten seconds.
\begin{table}
\caption{Load variation for differents frequency}
\label{tab:loadDataStat3}
\centering
\begin{tabular}{lccc}
           & Mean (kW) & SD (kW) & Relative SD (\%) \\
$f=1\text{ Hz}$  	& 0.000     & 0.184   & 0.68\\
$f=0.2\text{ Hz}$	& 0.000     & 0.425   & 1.56  \\
$f=0.1\text{ Hz}$	& 0.000     & 0.604   & 2.22 
\end{tabular}
\end{table}
Further numerical analysis shows that the load differences, for all the frequencies, can be modeled as Gaussian random variables thus validating equations \eqref{eq:deltap} and \eqref{eq:deltaq}, that can be assumed as a realistic model of load variation. 
%
%


\section{Results, Discussions and Conclusions}
\label{sec:results}
We tested our algorithm for topology detection on the IEEE 33-bus distribution test feeder \cite{parasher2014load}, which is illustrated in the Figure \ref{fig:ieee33}.
In this testbed, there are five breakers (namely $S_1$, $S_2$, $S_3$, $S_4$, $S_5$) that can be opened or closed, thus leading to the set of 32 possible topologies $\topology_1,\dots,\topology_{32}$. Because of the ratio between the number of buses and the number of switches, some very similar topologies can occur (for example the topology where only $S_1$ is closed and the one in which only $S_2$ is closed). 
In the IEEE33-bus test case, Assumption \ref{ass:sameR/X} about line impedances does not hold, making the test condition more realistic. Each bus of the network represent an aggregate of five houses, whose power demand is described by the statistical Gaussian model derived in Section \ref{sec:loadchar}, dependent on the sampling frequency, and whose characterization is reported in Table \ref{tab:loadDataStat3}.  
Regarding the measurement noise, we assume that the buses are endowed with high precision devices, the $\mu$PMU \cite{microPMU}, and with PTs. PMUs measurements are effected by Gaussian noise such that $\text{TVE} \leq 0.05 \%$, based on the PMU manufacturer test information. It is also comply the IEEE standard C37.118.1-2011 for PMUs \cite{6111219}. The PTs instead introduce a constant error (but peculiar of each node) that satisfy the requirements of the standard \cite{std199760044}.
\begin{figure}[]
\centering	
\includegraphics[width=0.28\textwidth]{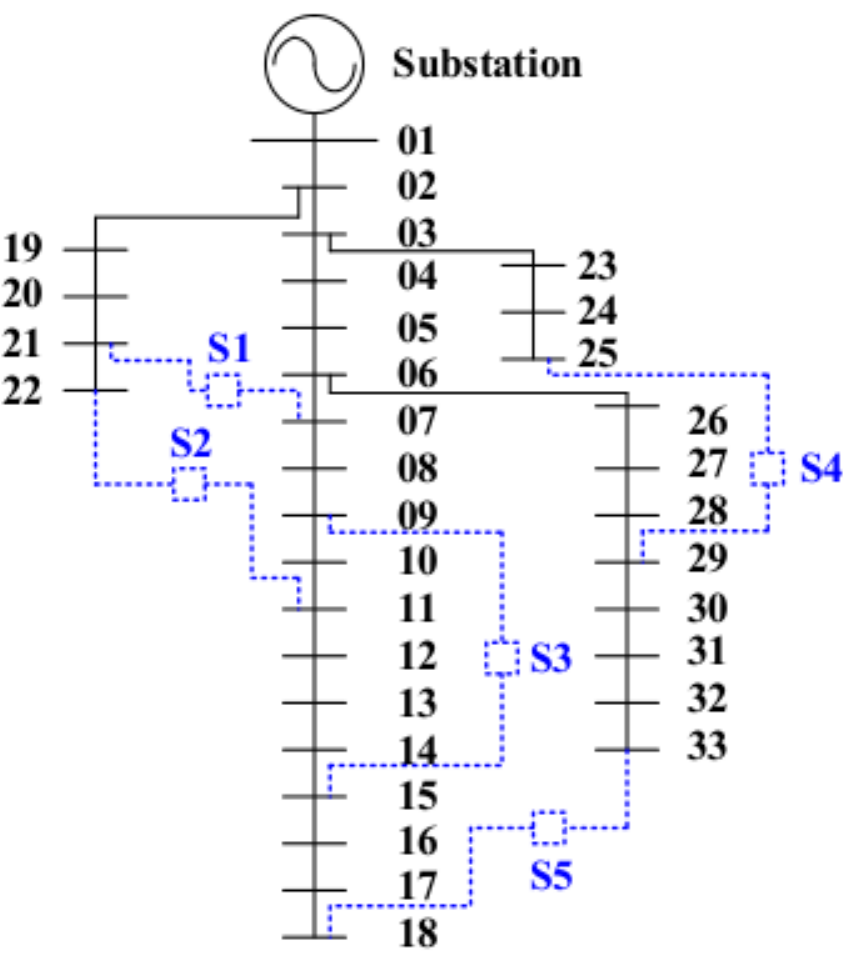}
\caption{Schematic representation of the IEEE33 buses distribution test case with the five switches}
\label{fig:ieee33}
\end{figure}

\subsection{Trend vectors and noise treatment strategy}
Here we provide simulations that show the problems given by noise and that validate the strategy we use to overcome these problems. The overall time window simulated is of 1000 seconds, the measures frequency is of one every 10 seconds and the topology transition ($f = 1/10 \text{ sec}^{-1}$), from $\sigma_1 = (1,1,1,0,1)$ to $\sigma_2 = (1,1,0,0,1)$, happens at $t = 480$ sec. In Figure \ref{fig:normtrendvect_wn} we see what happens to the trend vector norm, while in Figure \ref{fig:projwn} we plot $\max \mathcal C$, when we are in a noiseless scenario with loads not time varying. We can see that the switch action instant is clear.
\begin{figure}
\centering	
\includegraphics[width=0.46\textwidth]{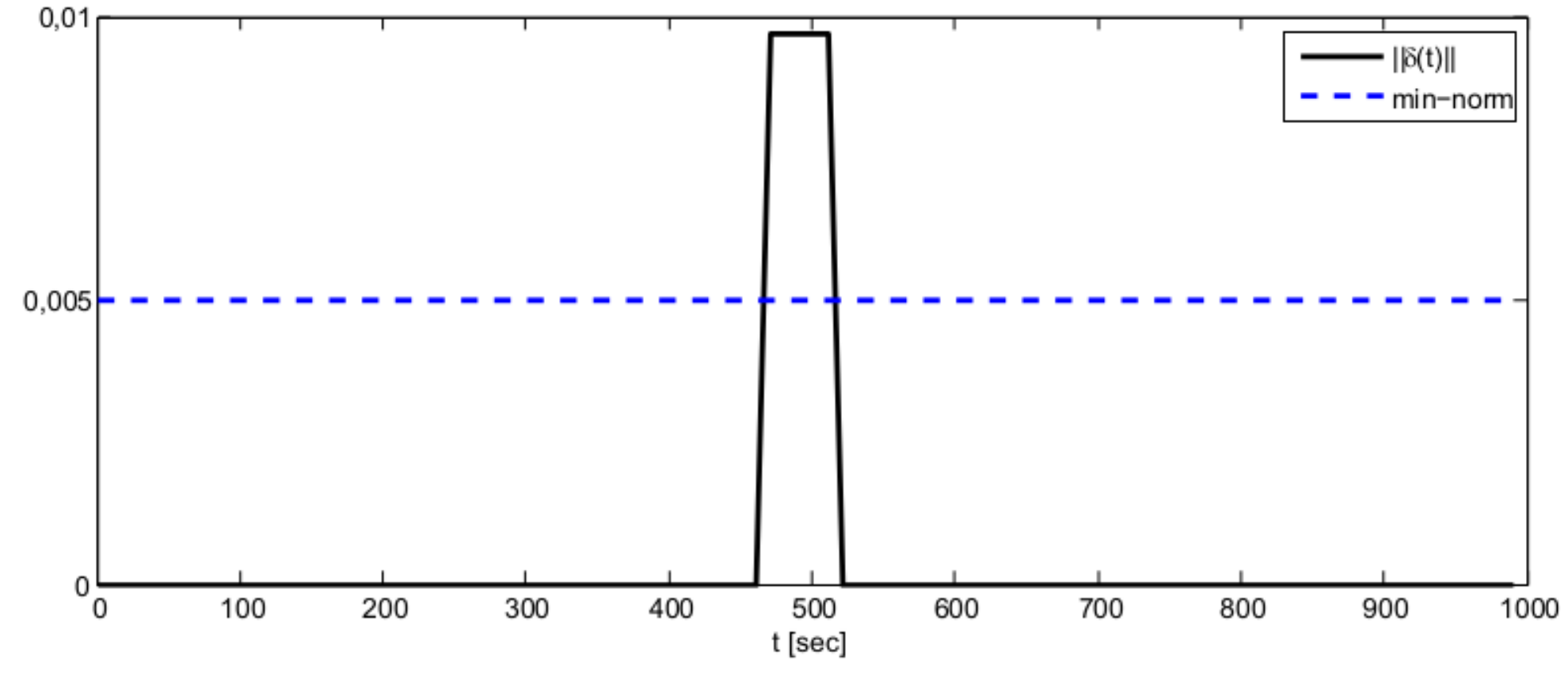}
\caption{Trend vector norm with noiseless measurements and loads non time varying}
\label{fig:normtrendvect_wn}
\end{figure}
\begin{figure}
\centering	
\includegraphics[width=0.48\textwidth]{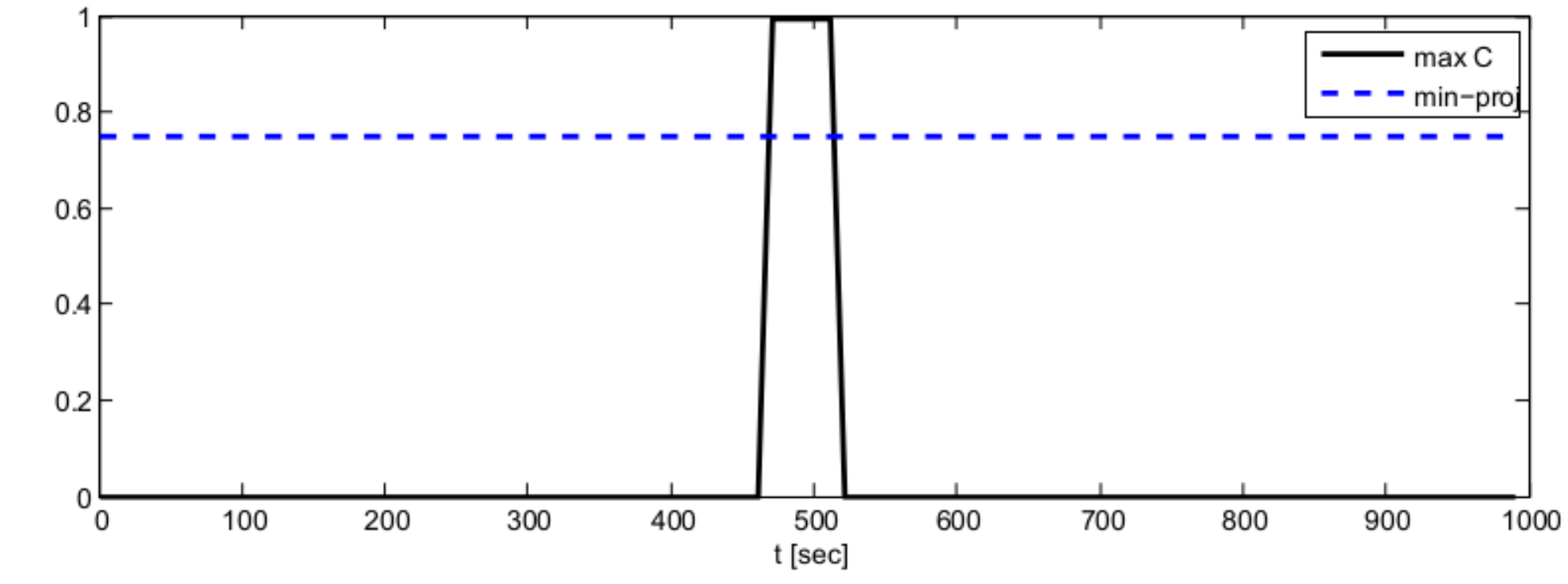}
\caption{Trajectory of the $\max \mathcal C$ with noiseless measurements and loads non time varying}
\label{fig:projwn}
\end{figure}
In the case with measurements noise and time varying loads, the trajectory of the trend vector norm, compared with \emph{min{\textunderscore}norm} numerically setted to $0.05$, is reported in Figure \ref{fig:normtrendvect}. In Figure \ref{fig:projnwn} instead we plot $\max \mathcal C$, without putting to zero the trend vectors whose norm is lower than \emph{min{\textunderscore}norm} before the projection. We see that there are several cluster of time istants in which $\max \mathcal C$ is greater than \emph{min{\textunderscore}norm}, and this leads to a number of faulse positive. 
In Figure \ref{fig:projwnth} we plot $\max \mathcal C$, after putting to zero the trend vectors whose norm is lower than \emph{min{\textunderscore}norm} before the projection. We still have more than one cluster, but only one has a length of $\tau \frac 1 f$ seconds, thus revealing the switch action instant, and the validity of our strategy.
\begin{figure}
\centering	
\includegraphics[width=0.48\textwidth]{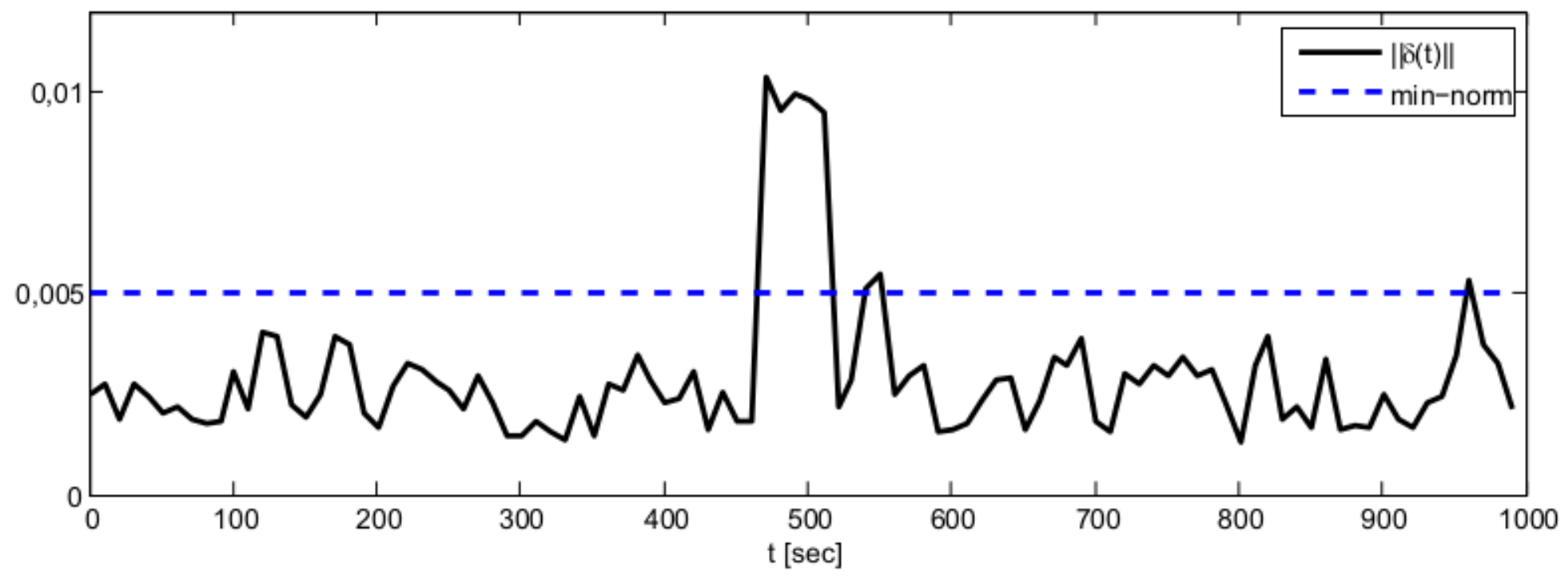}
\caption{Trend vector norm with noisy measurements and time varying loads}
\label{fig:normtrendvect}
\end{figure}
\begin{figure}
\centering	
\includegraphics[width=0.48\textwidth]{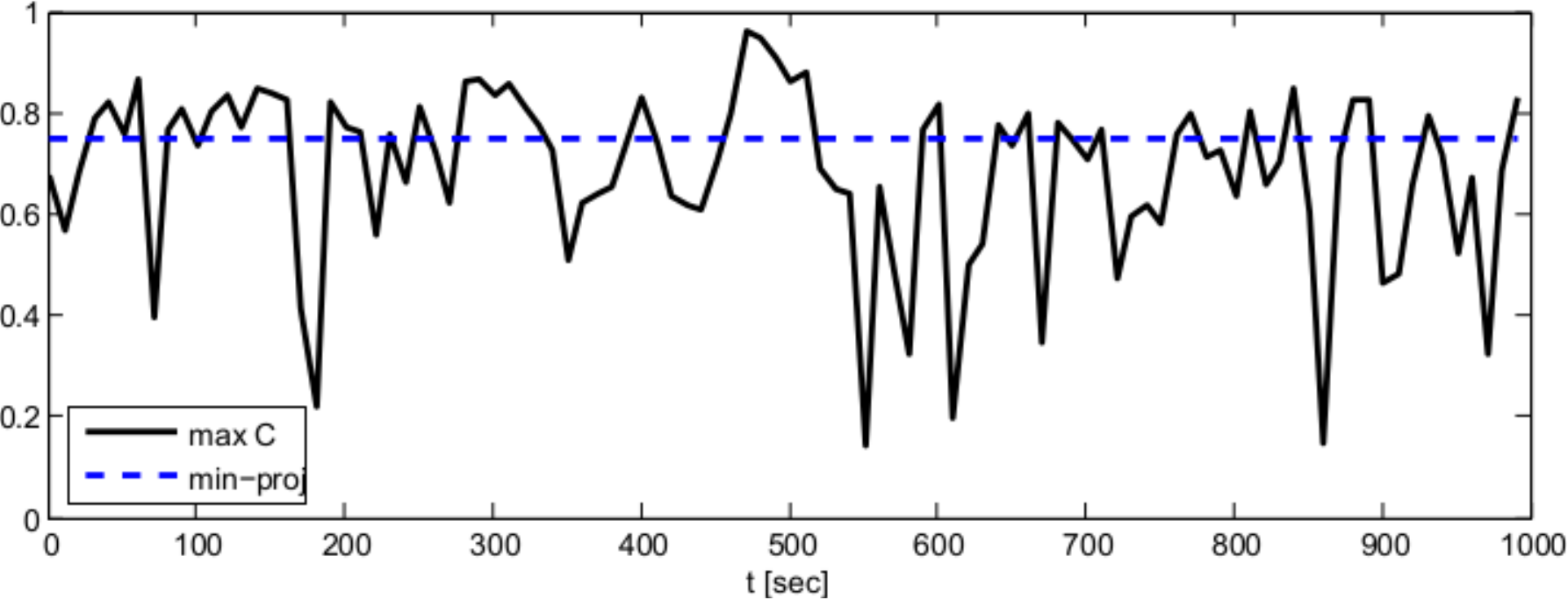}
\caption{Trajectory of the $\max \mathcal C$ with noisy measurements and time varying loads}
\label{fig:projnwn}
\end{figure}
\begin{figure}
\centering	
\includegraphics[width=0.48\textwidth]{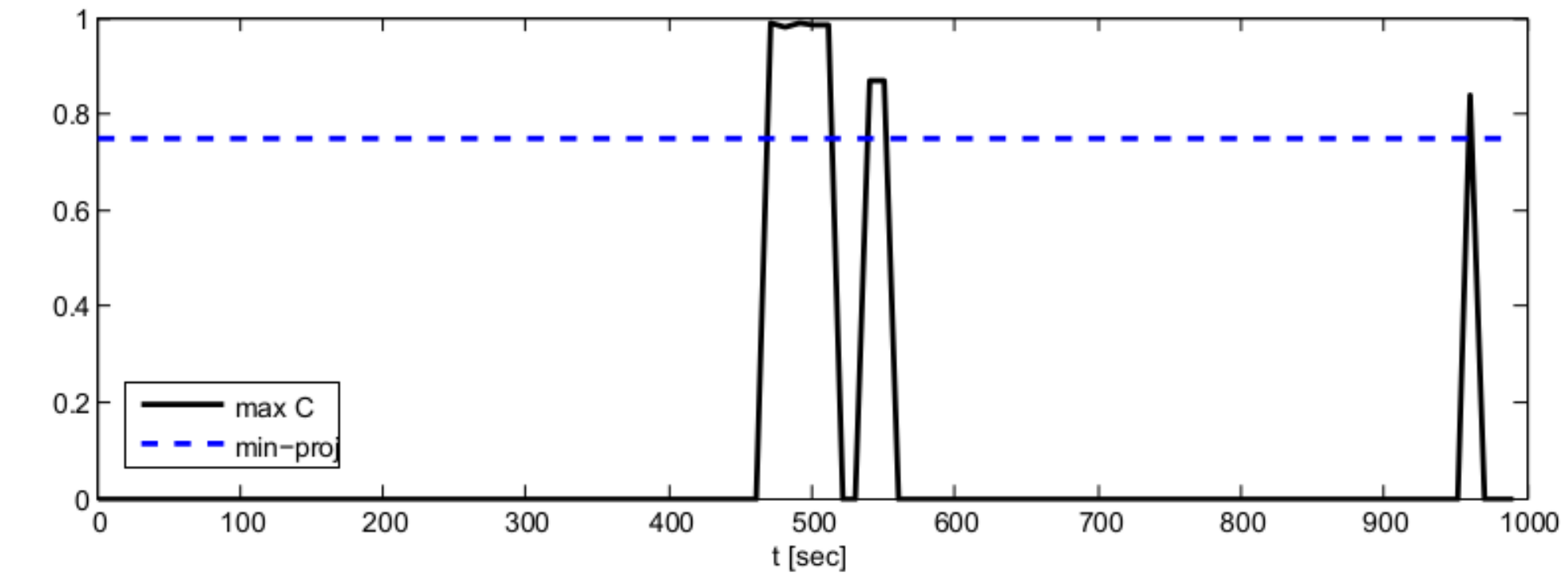}
\caption{Trajectory of the $\max \mathcal C$ with noisy measurements and time varying loads after putting to zero the vectors whose norm is lower than \emph{min{\textunderscore}norm}}
\label{fig:projwnth}
\end{figure}

\subsection{Simulation of the algorithm}

Here we tested the entire switches monitoring algorithm, in different situations and with different placements of PMUs:
\begin{itemize}
\item $\mathcal P_{33} = \{1,\dots,33 \}$, where every node is endowed with a PMU;
\item \begin{small}
$\mathcal P_{15} = \{3,8,9,10,12,15,16,17,18,19,21,24,25,27,30 \}$
\end{small}, where there are 15 PMUs, almost one every two nodes;
\item $\mathcal P_7 = \{9,12,15,18,24,27,30 \}$, where there are 7 PMUs, almost one every four nodes
\end{itemize}
$\mathcal P_{15}$ and $\mathcal P_7$ has been computed using Algorithm~\ref{alg:greedy}. The algorithm has been tested in each condition via 10000 Monte Carlo simulations..
Firstly we tested the algorithm in the ideal case without load variation nor measurement noise, resulting in no errors. Therefore, we can see that in the steady-state condition and in the absence of noise, the algorithm is extremely efficient for the 33-bus test case. It also overcomes the linearization from Proposition~\ref{pro:approximation} and the initial Assumption~\ref{ass:sameR/X}.
Secondly we added the measurement noise and different levels of load variation (or alternatively of measures frequency). The results are reported in Table~\ref{tab:sim33}, Table~\ref{tab:sim15} and Table~\ref{tab:sim7}. 
The field ``non detection'' refers to the number of run in which the algorithm doesn't comprehend that there has been a switching action, the field ``wrong detection'' refers to the number of times the algorithm detects a false action, while ``decision errors'' is the number of times the algorithm provide a wrong breakers status estimation.
Of course every time there is a wrong detection, we have a decision error, too. If we subtract the values of the second column to the values of the third, we find very small number, meaning that, once detected the exact action time, the algorithm works very well.  
The main challenge is therefore to detect the topology transition time. 
However our approach is very robust with the number of sensors: the percentage of errors with only 7 PMUs (where we have partial information on the grid state) is always worst than the one with 33 PMUs (where we have global information on the grid state) less than $2.5\%$, and probably with the best parameter tuning (\emph{min{\textunderscore}norm} and \emph{min{\textunderscore}proj}), it can still be improved.  
\begin{table}
\caption{Simulation Results with 33 PMUs}
\label{tab:sim33}
\centering
\begin{tabular}{lccccc}
SD [kV] & non & wrong  & decision & total & perc. of \\
	& detections & detection & errors & errors & errors (\%) \\
0  		& 0     & 50   	& 50    & 100	& 1.00 \\
0.68, ($f=1\text{ Hz}$)	& 0     & 64   	& 67    & 131 	& 1.31 \\
1.56, ($f=0.2\text{ Hz}$)	& 17		& 131   & 152 	& 300		& 3.00 \\
2.22, ($f=0.1\text{ Hz}$)	& 72   	& 211  	&	244		& 527		& 5.27 
\end{tabular}
\end{table}
\begin{table}
\caption{Simulation Results with 15 PMUs}
\label{tab:sim15}
\centering
\begin{tabular}{lccccc}
Relative & non & wrong  & decision & total & perc. of \\
SD (\%)	& detections & detection & errors &errors & errors (\%) \\
0 		& 0     & 52   	&  52  & 104		& 1.04 \\
0.68, ($f=1\text{ Hz}$)	& 0     & 73   	&  76	& 149 	& 1.49 \\
1.56, ($f=0.2\text{ Hz}$)	& 29			& 135   	&  158  & 322		& 3.22 \\
2.22, ($f=0.1\text{ Hz}$)	& 74    & 213  		&	 252	& 539		& 5.39 
\end{tabular}
\end{table}
\begin{table}
\caption{Simulation Results with 7 PMUs}
\label{tab:sim7}
\centering
\begin{tabular}{lccccc}
Relative & non & wrong  & decision & total & perc. of \\
SD (\%)	& detections & detection & errors &errors & errors (\%) \\
0 		& 0     & 56   	&  56  & 112		& 1.12 \\
0.68, ($f=1\text{ Hz}$)	& 2     & 180   	&  185	& 367 	& 3.65 \\
1.56, ($f=0.2\text{ Hz}$)	& 31			& 199   	&  209  & 441		& 4.41 \\
2.22, ($f=0.1\text{ Hz}$)	& 76    & 245  		&	 298	& 619		& 6.19 
\end{tabular}
\end{table}

\section{Conclusions}
\label{sec:conclusions}

In this paper we propose a novel strategy for the monitoring and the identification of switches action in a distribution grid. The novelty of this algorithm is the possibility of running it in real time and the fact that it could work satisfactorily with only a partial knowledge of the grid state. It allows us, exploiting a voltage phasorial measurements, to understand both the time of the switching action and the new topology, by comparing the trend vector, built by data, with other vectors contained in a library, that represent the a priori knowledge of the electrical network.
The algorithm has been tested in a realistic scenario, where both measurement noise and load variation had realistic characterization. In particular, we provide a simply but plausible model of the fast time scale load variation, validated using real field load data.
The simulations show the algorithm behavior in different scenario.
In the ideal one, with static loads and perfect measurements devices, the algorithm gives no errors. When instead we have noisy PMUs and load variation, we still have satisfactory results, strengthened by the fact that the algorithm performance are very similar both in the case of one PMU per bus or of one PMU every four nodes.

\bibliographystyle{IEEEtran}
\bibliography{bibTex_Topology}

\begin{thebibliography}{10}
\providecommand{\url}[1]{#1}
\csname url@samestyle\endcsname
\providecommand{\newblock}{\relax}
\providecommand{\bibinfo}[2]{#2}
\providecommand{\BIBentrySTDinterwordspacing}{\spaceskip=0pt\relax}
\providecommand{\BIBentryALTinterwordstretchfactor}{4}
\providecommand{\BIBentryALTinterwordspacing}{\spaceskip=\fontdimen2\font plus
\BIBentryALTinterwordstretchfactor\fontdimen3\font minus
  \fontdimen4\font\relax}
\providecommand{\BIBforeignlanguage}[2]{{%
\expandafter\ifx\csname l@#1\endcsname\relax
\typeout{** WARNING: IEEEtran.bst: No hyphenation pattern has been}%
\typeout{** loaded for the language `#1'. Using the pattern for}%
\typeout{** the default language instead.}%
\else
\language=\csname l@#1\endcsname
\fi
#2}}
\providecommand{\BIBdecl}{\relax}
\BIBdecl

\bibitem{vonMeier2014Chap}
``Chapter 34 - every moment counts: Synchrophasors for distribution networks
  with variable resources,'' in \emph{Renewable Energy Integration}, L.~E.
  Jones, Ed.\hskip 1em plus 0.5em minus 0.4em\relax Boston: Academic Press,
  2014, pp. 429--438.

\bibitem{AMP2014}
A.~von Meier, M.~L. Brown, R.~Arghandeh, L.~Mehrmanesh, L.~Cibulka, and
  B.~Russ, ``Electric distribution advanced monitoring plan (amp),'' in
  \emph{Electric distribution Advanced Monitoring Plan}, 2014.

\bibitem{Lueken2012}
\BIBentryALTinterwordspacing
C.~Lueken, P.~M. Carvalho, and J.~Apt, ``Distribution grid reconfiguration
  reduces power losses and helps integrate renewables,'' \emph{Energy Policy},
  vol.~48, no.~0, pp. 260 -- 273, 2012, special Section: Frontiers of
  Sustainability. [Online]. Available:
  \url{http://www.sciencedirect.com/science/article/pii/S0301421512004351}
\BIBentrySTDinterwordspacing

\bibitem{korres2012}
G.~N. Korres and N.~M. Manousakis, ``A state estimation algorithm for
  monitoring topology changes in distribution systems,'' in \emph{Power and
  Energy Society General Meeting, 2012 IEEE}.\hskip 1em plus 0.5em minus
  0.4em\relax IEEE, 2012, pp. 1--8.

\bibitem{sharon2012}
Y.~Sharon, A.~M. Annaswamy, A.~L. Motto, and A.~Chakraborty, ``Topology
  identification in distribution network with limited measurements,'' in
  \emph{Innovative Smart Grid Technologies (ISGT), 2012 IEEE PES}.\hskip 1em
  plus 0.5em minus 0.4em\relax IEEE, 2012, pp. 1--6.

\bibitem{Ciobotaru2007}
M.~Ciobotaru, R.~Teodorescu, and F.~Blaabjerg, ``On-line grid impedance
  estimation based on harmonic injection for grid-connected pv inverter,'' in
  \emph{Industrial Electronics, 2007. ISIE 2007. IEEE International Symposium
  on}, June 2007, pp. 2437--2442.

\bibitem{Liserre2007}
M.~Liserre, F.~Blaabjerg, and R.~Teodorescu, ``Grid impedance estimation via
  excitation of lcl -filter resonance,'' \emph{Industry Applications, IEEE
  Transactions on}, vol.~43, no.~5, pp. 1401--1407, Sept 2007.

\bibitem{Arya2011}
V.~Arya, D.~Seetharam, S.~Kalyanaraman, K.~Dontas, C.~Pavlovski, S.~Hoy, and
  J.~R. Kalagnanam, ``Phase identification in smart grids,'' in \emph{Smart
  Grid Communications (SmartGridComm), 2011 IEEE International Conference on},
  Conference Proceedings, pp. 25--30.

\bibitem{Short2013}
T.~A. Short, ``Advanced metering for phase identification, transformer
  identification, and secondary modeling,'' \emph{Smart Grid, IEEE Transactions
  on}, vol.~4, no.~2, pp. 651--658, 2013.

\bibitem{vonMeier2014}
A.~von Meier, D.~Culler, A.~McEachern, and R.~Arghandeh, ``Micro-synchrophasors
  for distribution systems,'' in \emph{Innovative Smart Grid Technologies
  Conference (ISGT), 2014 IEEE PES}, Feb 2014.

\bibitem{Zampieri}
S.~Bolognani and S.~Zampieri, ``A distributed control strategy for reactive
  power compensation in smart microgrids,'' \emph{IEEE Trans. on Automatic
  Control}, vol.~58, no.~11, November 2013.

\bibitem{schenato2014bayesian}
L.~Schenato, G.~Barchi, D.~Macii, R.~Arghandeh, K.~Poolla, and A.~Von~Meier,
  ``Bayesian linear state estimation using smart meters and pmus measurements
  in distribution grids,'' in \emph{IEEE International Conference on Smart Grid
  Communications 2014}.\hskip 1em plus 0.5em minus 0.4em\relax IEEE, 2014.

\bibitem{cavraroTAC2015}
S.~Bolognani, R.~Carli, G.~Cavraro, and S.~Zampieri, ``Distributed reactive
  power feedback control for voltage regulation and loss minimization,'' 2015.

\bibitem{cavraro2014cdc}
G.~Cavraro, R.~Carli, and S.~Zampieri, ``A distributed control algorithm for
  the minimization of the power generation cost in smart micro-grid,'' in
  \emph{Conference on Decision and Control (CDC14)}, 2014.

\bibitem{miller1981inverse}
K.~S. Miller, ``On the inverse of the sum of matrices,'' \emph{Mathematics
  Magazine}, pp. 67--72, 1981.

\bibitem{cavraroISGT2015}
G.~Cavraro, R.~Arghandeh, G.~Barchi, and A.~von Meier, ``Distribution networ
  topology detection with time-series measurements,'' in \emph{Innovative Smart
  Grid Technologies (ISGT), 2015 IEEE PES}.\hskip 1em plus 0.5em minus
  0.4em\relax IEEE, 2015.

\bibitem{parasher2014load}
R.~Parasher, ``Load flow analysis of radial distribution network using linear
  data structure,'' \emph{arXiv preprint arXiv:1403.4702}, 2014.

\bibitem{microPMU}
``Pqube 3, new low cost, high-precision power quality, energy and environment
  monitoring,'' in \emph{Power Standard Lab. Inc.
  (PSL),http://www.powerstandards.com/}.

\bibitem{6111219}
``Ieee standard for synchrophasor measurements for power systems,'' \emph{IEEE
  Std C37.118.1-2011 (Revision of IEEE Std C37.118-2005)}, pp. 1--61, Dec 2011.

\bibitem{std199760044}
I.~Std, ``60044-2,“instrument transformers--part 2: Inductive voltage
  transformers”,'' \emph{International Electrotechnical Commission, Geneva
  (Switzerland)}, 1997.

\end{thebibliography}

\end{document}